\UseRawInputEncoding 

\documentclass[aps,pra,twocolumn,showpacs,superscriptaddress]{revtex4-1}
\usepackage{graphicx}
\usepackage{dcolumn}
\usepackage{bm}
\usepackage{xcolor}
\usepackage{times}
\usepackage{amssymb}
\usepackage{amsmath}
\usepackage{epsfig}
\usepackage{epstopdf}
\usepackage{dsfont}
\usepackage{subfigure}
\usepackage[colorlinks=true,citecolor=blue, linkcolor=blue,urlcolor=blue]{hyperref}
\usepackage{setspace}
\usepackage{bookmark}
\usepackage{amsthm}
\usepackage{mathtools}
\usepackage{amsfonts}

\theoremstyle{definition}
\newtheorem{theorem}{Theorem}
\newtheorem{lemma}{Lemma}
\newtheorem{corollary}{Corollary}
\definecolor{Green3}{rgb}{0.80,0.87,0.76}

\begin{document}

\title{Quantifying and Probing Multipartite Entanglement via Minimum  {Entanglement} Drop}

\author{Dong-Dong Dong}
\affiliation{School of Physics and Optoelectronics Engineering, Anhui University, Hefei
  230601,  People's Republic of China}
\author{Xue-Ke Song}%
\affiliation{School of Physics and Optoelectronics Engineering, Anhui University, Hefei
  230601,  People's Republic of China}

\author{Liu Ye}
\affiliation{School of Physics and Optoelectronics Engineering, Anhui University, Hefei
  230601,  People's Republic of China}
\author{Dong Wang}
\email{dwang@ahu.edu.cn}
\affiliation{School of Physics and Optoelectronics Engineering, Anhui University, Hefei
  230601,  People's Republic of China}

\date{\today}

\begin{abstract}

Quantifying genuine multipartite entanglement remains a significant challenge due to the exponential scaling of computational complexity. In this paper, we propose a novel  {multipartite entanglement monotone} defined by the minimum entanglement drop---the reduction in global one-to-group entanglement when a single constituent particle is traced out.
While analytically rooted in the generalized monogamy inequality, we formulate a computationally efficient variant based on tangle and  negativity to ensure non-vanishing values for W-class states.
We rigorously prove that this quantity  constitutes a valid entanglement monotone under local operations and classical communication. In the tripartite regime, we demonstrate that the minimum tangle drop is physically equivalent to the minimum pairwise concurrence. Furthermore,  { we establish a proof-of-principle operational framework where the entanglement drop serves as a structural heuristic probe: by evaluating the sensitivity of the system to qubit loss, we identify inseparable clusters within certain classes of multipartite states, effectively extracting their connectivity fingerprints, which can uniquely differentiate graph topologies even within the same local Clifford  equivalence class.
 Building on this localized mapping, we highlight its practical utility by integrating it with the classical shadows formalism for efficient experimental estimation, and demonstrate its unique capability to dynamically track the spatiotemporal evolution of entanglement networks. {To further validate its scalability, we derive exact analytical solutions for $n$-qubit W states under environmental noise, revealing the robust scaling behaviors of the proposed measure.} Finally, to ensure a balanced assessment, we candidly acknowledge the fundamental limitations of this heuristic probe, noting that its diagnostic sensitivity strictly vanishes for highly robust, symmetrically correlated states, such as the 5-qubit error-correcting code.}

\end{abstract}

\maketitle

\section{Introduction}
Quantum entanglement is widely recognized as a cornerstone of quantum mechanics and an indispensable resource for a vast array of quantum information processing tasks. These applications range from quantum cryptography \cite{Ekert1991, Gisin2002} and superdense coding \cite{Bennett1992} to quantum teleportation \cite{Bennett1993}, as well as advanced schemes in quantum metrology \cite{Giovannetti2011} and quantum computation \cite{NielsenChuang, Horodecki2009, Amico2008}.
While the characterization of bipartite entanglement is relatively well-established, significant distinctions exist depending on the dimension of the system. For two-qubit systems, entanglement is thoroughly understood, with computable measures such as concurrence for mixed states \cite{Wootters1998} and negativity \cite{Vidal2002}. However, the scenario becomes considerably more intricate for high-dimensional bipartite mixed states. In these cases, extending pure-state measures typically relies on the convex roof construction \cite{PhysRevA.54.3824,PhysRevA.64.042315,PhysRevA.79.012329},
 which requires optimization over all possible ensemble decompositions¡ªa task known to be NP-hard \cite{Gurvits2003}. Although negativity remains computable in higher dimensions, it fails to detect bound entanglement, i.e., entangled states that have a positive partial transpose (PPT) and cannot be distilled \cite{Horodecki1998}.

The quantification of multipartite entanglement presents an even more profound challenge due to the exponentially increasing complexity of the Hilbert space and the rich variety of ways particles can be correlated. For three-qubit systems, it is known that there are two inequivalent classes of genuine multipartite entanglement under stochastic local operations and classical communication (SLOCC): the Greenberger-Horne-Zeilinger (GHZ) class and the W class \cite{Dur2000}. As the system size increases to four qubits, the classification becomes significantly more complex, yielding nine distinct SLOCC families \cite{Verstraete2002}.
In recent years, significant efforts have been made to construct faithful measures of genuine multipartite entanglement (GME).
Ma \emph{et al.} introduced the genuine multipartite concurrence (GMC), defined as the minimum bipartite concurrence across all possible bipartitions, ensuring that the measure vanishes for biseparable states \cite{Ma2011}. More recently, geometric interpretations have provided new insights. Xie and Eberly proposed the ``concurrence fill'' based on the area of a concurrence triangle for three-qubit states \cite{Xie2021}. While initially introduced as a geometric invariant, this measure was subsequently refined to strictly satisfy the monotonicity requirement under local operations, thereby establishing it as a faithful quantifier of entanglement \cite{PhysRevA.107.032405}. In parallel, Li and Shang extended the geometric mean of bipartite concurrences (GBC) to arbitrary multipartite systems, offering a unified framework for high-dimensional entanglement characterization \cite{Li2022}.
 {For a comprehensive progress report on these measures, we refer to a recent review~\cite{MA20252489}. Beyond these geometric schemes, alternative quantifiers have been established to analyze the intricate structure of multipartite correlations, such as the L-entropy measure~\cite{kmpl-mdbx}, as well as GME characterizations in quantum optimization processes~\cite{PhysRevA.111.022434}.}

On the other hand, a fundamental property governing the distribution of entanglement in multipartite systems is monogamy, which dictates that entanglement cannot be freely shared among many parties. First formalized by Coffman, Kundu, and Wootters (CKW) for three qubits \cite{Coffman2000}, the monogamy inequality relates the bipartite entanglement between a focus qubit and the rest of the system (the one-tangle) to the sum of reduced two-qubit entanglements.
The residual entanglement arising from this inequality, known as the three-tangle, serves as a measure of genuine tripartite entanglement.
Scaling these concepts beyond the tripartite regime,
the CKW framework was generalized to $n$-qubit systems by Osborne and Verstraete \cite{Osborne2006}.
To capture the finer correlation structure, Regula \emph{et al.} proposed the ``strong monogamy'' conjecture \cite{Regula2014}, which tightens these constraints by explicitly incorporating the residual entanglement of $m$-particle subsystems ($m < n$) within the global state, rather than limiting the analysis to pairwise sums.
However, despite its significance, this strong monogamy relation remains a conjecture and has not been rigorously proven for general quantum states.

Furthermore, the quantification of GME faces severe computational hurdles.  Even when restricted to pure states, calculating measures like GMC and GBC requires minimizing or averaging over all possible bipartitions of the system. Since the number of bipartitions scales exponentially with the particle number $n$, the computational complexity becomes prohibitive as the system size increases. Consequently, establishing a multipartite entanglement monotone that is both physically meaningful and efficiently computable remains a task of paramount importance.

Motivated by these developments, in this paper, we propose a new  {multipartite entanglement monotone} defined by the minimum entanglement drop induced by qubit loss. We rigorously establish the validity of this quantity by proving its monotonicity under local operations and classical communication (LOCC). To ensure computational efficiency and detect entanglement in W-class states, we further formulate a practical variant based on Negativity, termed the minimum negativity drop. Finally, we demonstrate that this drop quantity serves as a structural probe for identifying inseparable clusters within multipartite systems, effectively revealing the connectivity fingerprints of various graph states,  {demonstrating the striking capability to distinguish distinct network structures even within a single local Clifford (LC) orbit. Furthermore, we showcase the practical advantages of our method by combining it with the classical shadows formalism to bypass exponential measurement costs on near-term devices, and apply it to monitor the dynamic formation of entanglement networks under Ising interactions.  {To benchmark its practical utility beyond small systems, we derive closed-form analytical expressions for $n$-qubit W states subjected to environmental noise, explicitly demonstrating the measure's predictable scaling laws and robustness.} While demonstrating these capabilities, we also explicitly delineate the boundaries of our approach, candidly acknowledging that this heuristic structural probe fails to detect entanglement in highly robust, symmetrically correlated systems, such as the 5-qubit error-correcting code.}

\section{The minimum residual entanglement (MRE) of the generalized CKW inequality}
 {
Before detailing the primary contribution of this work¡ªthe minimum negativity drop¡ªwe first introduce a preliminary concept: the MRE. We explicitly note that the MRE is presented primarily as a theoretical bridge. Formulating the MRE serves two crucial purposes: first, it resolves critical physical inconsistencies found in previous multi-focus monogamy definitions; second, it establishes the core mathematical machinery of this paper, particularly the proof techniques for LOCC monotonicity (Theorem 1) that will be inherited by our subsequent measures. However, we will also demonstrate that the MRE ultimately exhibits a critical deficiency, as it fails to detect certain genuine multipartite entangled states (such as the W state). It is exactly this inherent limitation that naturally motivates the transition to the central framework of our paper in Section III: the entanglement drop induced by particle loss.}

In 2000, Coffman \emph{et al.} first formalized the monogamy of entanglement for a three-qubit system via an inequality, known as CKW inequality \cite{Coffman2000}
\begin{align}
\tau_{A(BC)} \geq \tau_{AB} + \tau_{AC},
  \label{Eq.1}
    \end{align}
where $\tau_{A(BC)}$ represents the square of the   bipartite concurrence (also known as tangle) for the bipartition $A:BC$, which is defined as:
\begin{align}
  \tau_{A(BC)}=2[1-tr(\rho_A^2)]=4det(\rho_A),
  \label{Eq.2}
\end{align}
where $\rho_A$ is the reduced matrix of the subsystem A. The tangle for any mixed state $\rho_{ABC}$ is defined via the convex roof construction
\begin{align}
  \tau_{A(BC)}:=\inf_{\{p_i, |\psi_i\rangle\}} \sum_i p_i \mathcal{\tau}_{A(BC)}(|\psi_i \rangle),
  \label{Eq.3}
\end{align}
where the infimum runs over all pure-state decompositions of $\rho_{ABC}$, $\rho_{ABC}=\sum_i p_i |\psi_i\rangle \langle \psi_i|$.
Additionally, $\tau_{AB}$ and $\tau_{AC}$ denote squared concurrence of reduced states $\rho_{AB}$ and $\rho_{AC}$, respectively.
For a general mixed two-qubit state $\rho_{AB}$, the concurrence $C_{AB}$ is determined by the eigenvalues of the non-Hermitian matrix $R = \rho_{AB} \tilde{\rho}_{AB}$. The formula is given by
\begin{equation}
    C_{AB} = \max\{0, \lambda_1 - \lambda_2 - \lambda_3 - \lambda_4\},
    \label{Eq.c4}
\end{equation}
where $\lambda_1 \geq \lambda_2 \geq \lambda_3 \geq \lambda_4$ are the square roots of the eigenvalues of $R$. Here, $\tilde{\rho}_{AB}$ denotes the spin-flipped density matrix defined as
\begin{equation}
    \tilde{\rho}_{AB} = (\sigma_y \otimes \sigma_y) \rho_{AB}^* (\sigma_y \otimes \sigma_y),
\end{equation}
where $\rho_{AB}^*$ is the complex conjugate of $\rho_{AB}$, and $\sigma_y$ represents the Pauli-$Y$ matrix.

In fact, for an arbitrary $2 \times d$ system, the tangle can be  defined by Eqs. (\ref{Eq.2}) and (\ref{Eq.3}) \cite{PhysRevA.64.042315}.
Subsequently, for an arbitrary $n$-qubit state involving qubits $q_1, q_2, \dots, q_n$, a multipartite monogamy inequality analogous to Eq.~(\ref{Eq.1}) holds~\cite{Osborne2006}:
 {
\begin{equation}
 \tau_{q_1(q_2\dots q_n)} \geq \tau_{q_1q_i} + \tau_{q_1(q_2\dots \hat{q}_i \dots q_n)},
    \label{Eq.9}
\end{equation}}
where $i \in \{2, \dots, n\}$ denotes an arbitrary index of the remaining qubits,  {and the hat notation $\hat{q}_i$ explicitly denotes the omission of the qubit $q_i$ from the sequence.} By iteratively applying this inequality to the remaining high-dimensional subsystem (decomposing it until only single qubits remain), a generalized version involving exclusively two-qubit tangles is obtained:
\begin{align}
\tau_{q_1(q_2...q_n)} \geq \tau_{q_1q_2} + \tau_{q_1q_3}+...+\tau_{q_1q_n}.
\label{Eq.10}
    \end{align}

Notably, for three-qubit pure states, the difference between the left- and right-hand side of Eq. (\ref{Eq.1}) can serve  as a measure of tripartite entanglement, namely residual entanglement, also known as three-tangle
\begin{align}
\tau_{ABC}:=\tau_{A(BC)} - \tau_{AB} - \tau_{AC},
  \label{Eq.6}
    \end{align}
which is invariant under qubit permutation.
For mixed states, the three-tangle is defined via the convex roof construction, taking the infimum over all possible ensembles of pure states.

 {
To generalize this concept to larger multiqubit systems, Ref.~\cite{PhysRevA.71.042331} proposed a multipartite residual entanglement measure by utilizing monogamy inequalities with multiple focus qubits. Specifically, by grouping a subset of $m$ qubits ($1 \le m \le \lfloor n/2 \rfloor$) to act as a collective focus block $S_\alpha$, the residual entanglement with respect to this specific focus is evaluated as:
\begin{equation}
    R_{S_\alpha} = \tau_{S_\alpha | \overline{S}_\alpha} - \sum_{i \in \overline{S}_\alpha} \tau_{S_\alpha | i},
    \label{Eq.MultiFocus_sub}
\end{equation}
where $\overline{S}_\alpha$ denotes the remaining $(n-m)$ qubits. The global multipartite residual entanglement is then defined by taking the minimum over all possible choices of the focus blocks:
\begin{equation}
    R_{\text{global}} = \min_{S_\alpha} \{ R_{S_\alpha} \}.
    \label{Eq.MultiFocus_min}
\end{equation}
}

 {
However, the residual entanglement derived solely from these relations is not regarded as a strictly qualified measure of GME.}
 Generally, a valid GME measure is required to satisfy five postulates \cite{Ma2011}:

(i) it vanishes for biseparable states;

(ii) it is strictly positive for all GME states;

(iii) it satisfies convexity for mixed states;

(iv) it is  invariant under local unitary (LU) operations;

(v) it satisfies monotonicity under LOCC.

\noindent  {Neither the three-tangle nor the generalized multi-focus residual entanglement satisfies all these conditions.} First, the three-tangle crucially fails to satisfy the second condition.
 A prominent counterexample is the $W$ state, defined as
\begin{equation}
    |W\rangle = \frac{1}{\sqrt{3}} (|100\rangle + |010\rangle + |001\rangle).
\end{equation}
While the $W$ state is genuinely entangled, its three-tangle is zero.

 {Second, even within the generalized framework proposed in Ref.~\cite{PhysRevA.71.042331}, the specific instance of residual entanglement defined with a single focus qubit [$m=1$, yielding Eq.~(\ref{Eq.10})] exhibits a critical flaw.} It severely violates condition (i), meaning it does not strictly vanish for biseparable states.
A compelling counterexample is the tensor product of two GHZ states:
\begin{equation}
    |\Psi\rangle = |\text{GHZ}\rangle_{123} \otimes |\text{GHZ}\rangle_{456}.
\end{equation}
For any individual qubit $k$ in this system, the one-to-group squared concurrence is unity ($\tau_{k|\text{rest}}=1$), while all pairwise squared concurrences vanish ($\tau_{kj}=0$). Consequently, the residual entanglement calculated via Eq.~(\ref{Eq.10}) is $1$. However, since $|\Psi\rangle$ is clearly biseparable (separable with respect to the partition $123|456$), a valid GME measure must be zero.
 {Furthermore, this multi-focus approach implicitly relies on high-dimensional monogamy inequalities, }which have subsequently been proven to be invalid for the tangle in generic high-dimensional systems \cite{PhysRevA.75.034305}.

 {
To resolve this deficiency for small systems (three qubits),}  the minimum pairwise concurrence (MPC) was introduced~\cite{Dong2024_MPC}. It is defined as the square root of the sum of the three-tangle and the minimum two-qubit squared concurrence:
\begin{equation}
    \mathcal{M}_{\text{ABC}} = \sqrt{\tau_{ABC} + \min\{\tau_{AB}, \tau_{AC}, \tau_{BC}\}},
\end{equation}
which has been proven to be a qualified multipartite entanglement measure.

 {Motivated by the necessity to overcome these profound theoretical drawbacks---namely,  the failure of single-focus subtractions to vanish for biseparable states and the invalidity of high-dimensional tangle monogamy---we propose a refined, strictly generalized definition of multipartite residual entanglement.}

\begin{theorem}
 For an $n$-qubit pure state, the minimum residual entanglement  with respect to qubit $q_1$ is defined as
\begin{equation}
   {
  \mathcal{R}_{q_1} = \min_{i\in\{2,\dots,n\}} \sqrt{\tau_{q_1(q_2\dots q_n)} - \tau_{q_1q_i} - \tau_{q_1(q_2\dots \hat{q}_i \dots q_n)}}.}
    \label{Eq.ResEntDefinition}
\end{equation}
For mixed states, it is defined via the convex roof extension, taking the infimum over all possible ensembles of pure states.
 This quantity is an entanglement monotone and satisfies four of the five criteria for a GME measure, with the exception of the Condition (ii).
\end{theorem}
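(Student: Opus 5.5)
The plan is to reduce the whole statement to properties of the single function
\[
f_i(|\psi\rangle)=\sqrt{\tau_{q_1(q_2\dots q_n)}-\tau_{q_1q_i}-\tau_{q_1(q_2\dots\hat q_i\dots q_n)}}
\]
for a fixed index $i$. The radicand is nonnegative by the Osborne--Verstraete inequality (\ref{Eq.9}), so $f_i$ is well defined. I would then lift the properties of $f_i$ to the minimum $\mathcal{R}_{q_1}=\min_i f_i$ and finally to its convex roof.

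\textbf{Lifting from $f_i$ to $\mathcal{R}_{q_1}$.} Suppose each $f_i$ is non-increasing on average under LOCC acting on pure states. Take an LOCC branch $|\psi\rangle\mapsto\{p_k,|\psi_k\rangle\}$ and let $i^*$ be the minimizing index for $|\psi\rangle$. Then
\[
\sum_k p_k\,\mathcal{R}_{q_1}(\psi_k)\le \sum_k p_k f_{i^*}(\psi_k)\le f_{i^*}(\psi)=\mathcal{R}_{q_1}(\psi).
\]
By Vidal's theorem, a pure-state function that is non-increasing on average under LOCC has a convex-roof extension that is an entanglement monotone. This settles (v). Condition (iii) holds automatically for any convex roof. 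Condition (iv) is immediate: every tangle appearing in $f_i$ is local-unitary invariant, whether it is $4\det\rho_{q_1}$ for pure states or a convex roof of LU-invariant quantities.

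\textbf{Monotonicity of a single $f_i$ (the main obstacle).} I would group the system as $q_1$, $q_i$, and the block $B=\{q_2,\dots,q_n\}\setminus\{q_i\}$, so that $|\psi\rangle$ is a $2\times2\times d$ pure state. Then
\[
f_i^2=\tau_{q_1(q_iB)}-\tau_{q_1q_i}-\tau_{q_1B}
\]
is the natural analogue of the three-tangle. Following D\"ur, Vidal and Cirac, I would decompose any LOCC protocol into local two-outcome measurements with Kraus operators $A_1,A_2$ satisfying $A_1^\dagger A_1+A_2^\dagger A_2=I$, applied by one party at a time. There are three cases.
\begin{itemize}
\item \emph{Operation on $q_1$.} I would show that each term of $f_i^2$ transforms covariantly, $f_i^2(A_k\psi)=|\det A_k|^2 f_i^2(\psi)/p_k^2$. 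For the pure-state term this follows from $\tau_{q_1|\mathrm{rest}}=4\det\rho_{q_1}$. For the two convex roofs it follows because a local filter on $q_1$ multiplies every decomposition element's concurrence by $|\det A_k|$. Combining this with $|\det A_1|+|\det A_2|\le1$ and concavity of the square root gives average non-increase, exactly as for the three-tangle.
\item \emph{Operation on $q_i$.} The same covariance argument applies, using the SLOCC invariance on the pair $q_1q_i$ of the two-qubit concurrence.
\item \emph{Operation on $B$.} This is the genuinely new and delicate case. Here I would first try to show that the radicand is an $\mathrm{SL}$-type invariant of degree two on the $q_1q_i$ part, for example by writing it through the two-qubit matrices $\langle b|\psi\rangle$ in a Schmidt basis of $B$, whose span has dimension at most $4$. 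If that works, a $B$-local Kraus operator only reshuffles these matrices and the average can only decrease. If a closed form is not available, the fallback is to prove average non-increase directly, using that $\tau_{q_1|\mathrm{rest}}$ is unchanged on average up to convexity, while the two subtracted convex roofs are LOCC monotones across the relevant cut.
\end{itemize}
I expect this $B$-side step to be the hardest part of the proof.

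\textbf{Conditions (i) and (ii).} For (i), take a pure state that is biseparable across $S|\bar S$ with $q_1\in S$, and choose $q_i\in\bar S$, which is possible because $\bar S$ is nonempty. Then $\tau_{q_1q_i}=0$. Moreover $q_1$ is correlated only with $S\setminus\{q_1\}$, and this set lies inside both the full remainder and the remainder with $q_i$ removed, so both one-to-group tangles equal $\tau_{q_1(S\setminus q_1)}$. Hence $f_i=0$ and $\mathcal{R}_{q_1}=0$. For a mixed biseparable state, decompose it into pure states that are each biseparable, possibly across different cuts; every term vanishes, so the convex roof is zero. For the failure of (ii), take the $W$ state with $n=3$. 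There $f_i^2$ equals the three-tangle, which vanishes, so $\mathcal{R}_{q_1}=0$ even though the $W$ state is genuinely multipartite entangled.
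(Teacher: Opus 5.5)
\textbf{Gap: condition (v) is not established, and the route you propose for it cannot work.} Your lifting from the $f_i$ to their minimum, and then to the convex roof via Vidal's theorem, is fine. But it needs each $f_i$ to be non-increasing on average under LOCC, and you never prove this. Your $2\times2\times d$ reduction and your ``$B$-local Kraus operator'' step treat $B=\{q_2,\dots,q_n\}\setminus\{q_i\}$ as a single party. That admits operations acting jointly on several qubits of $B$, and $f_i$ is not monotone under those.

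Take $n=5$, $q_1=1$, $q_i=2$, $B=\{3,4,5\}$ and $|\psi\rangle=|\Phi^+\rangle_{13}\otimes|\mathrm{GHZ}\rangle_{245}$. Then $\tau_{1(2345)}=1$ and $\tau_{12}=0$. Every pure state in the range of $\rho_{1345}=|\Phi^+\rangle\langle\Phi^+|_{13}\otimes\rho_{45}$ is maximally entangled across $1|345$, so $\tau_{1(345)}=1$ and $f_2(\psi)=0$. A Bell measurement on qubits $3,4$, which is local to $B$, swaps the entanglement. Every branch is then LU-equivalent to $|\mathrm{GHZ}\rangle_{125}\otimes|\Phi^+\rangle_{34}$, for which $\tau_{12}=\tau_{1(345)}=0$ and $f_2=1$. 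So the radicand has no invariance of the kind you hope for under $B$-local operations. Any correct argument must use that the qubits of $B$ act separately.

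Your fallback also has the wrong sign. Monotonicity of $\tau_{q_1B}$ under $B$-side operations means this subtracted term decreases on average, which pushes $f_i$ up. You would need its average drop to be controlled by that of $\tau_{q_1(\mathrm{rest})}$. In the example it drops by $1$ while $\tau_{1(2345)}$ does not move.

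The other two cases are not routine either.
\begin{itemize}
\item \emph{Filter on $q_1$.} The tangle roof is $\inf\sum_j 4\det(\mathrm{tr}_B|\tilde\phi_j\rangle\langle\tilde\phi_j|)/\|\tilde\phi_j\|^2$ over unnormalized decompositions. A filter $A$ on $q_1$ replaces the weights by $1/\|A\tilde\phi_j\|^2$, non-uniformly. So covariance of $\tau_{q_1B}$ does not follow from covariance of concurrences. For two qubits it is rescued by Wootters' equal-concurrence decompositions, but that is not available in general for $\rho_{q_1B}$ with $d>2$.
\item \emph{Filter on $q_i$.} Neither $\tau_{q_1(\mathrm{rest})}$ nor $\tau_{q_1B}$ is covariant on its own: if $q_i$ is in a product state, both are unchanged. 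In the D\"ur--Vidal--Cirac proof only their combination, the hyperdeterminant, is covariant.
\end{itemize}

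\textbf{How the paper argues.} It avoids party-by-party Kraus analysis entirely. Via the flag criterion of Lemma 1, it only needs $\mathcal{R}_{q_1}(\sqrt{p}|\psi\rangle|0\rangle_F+\sqrt{1-p}|\phi\rangle|1\rangle_F)\ge p\mathcal{R}_{q_1}(\psi)+(1-p)\mathcal{R}_{q_1}(\phi)$. It gets this from four ingredients: concavity of $4\det\rho_{q_1}$, \emph{convexity} (not monotonicity) of the convex-roof subtracted terms, concavity of the square root, and $\min_i (p a_i+(1-p)b_i)\ge p\min_i a_i+(1-p)\min_i b_i$. No SL-invariance is needed.

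Be aware, however, that the identity $\rho_S^\Psi=p\rho_S^\psi+(1-p)\rho_S^\phi$ used there holds only when the flag is traced out of $S$. If the flag sits on a party inside $S$, the reduced state acquires coherences $\sqrt{p(1-p)}\,\mathrm{tr}_{\bar S}(|\psi\rangle\langle\phi|)\otimes|0\rangle\langle 1|_F+\mathrm{h.c.}$. Every term contains $q_1$, and the two subtracted terms together contain all the other qubits, so every flag placement meets at least one term. A flag on a qubit of $B$ is the flag-side counterpart of your $B$-side obstacle, and convexity alone does not discharge it.

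Your treatment of (i), (iii), (iv), and of the failure of (ii) via the $W$ state, is correct and agrees with the paper.
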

\begin{proof}
    We now verify that $\mathcal{R}_{q_1}$ satisfies the required conditions.

    \medskip
    \noindent \textit{Condition (i): Biseparability.}
    Any biseparable state can be represented as a convex combination of biseparable pure states. Consider a pure state $|\psi\rangle$ that is biseparable with respect to a certain partition.

First, consider the case where qubit $q_1$ is fully separable from the rest of the system, i.e., $|\psi\rangle = |q_1\rangle \otimes |\phi\rangle_{\text{rest}}$. Here, the one-to-group tangle $\tau_{q_1(\text{rest})}$ is zero, rendering the $\mathcal{R}_{q_{1}}$ trivially zero.

On the other hand, consider the case where $q_1$ is entangled with a subset of particles $S$, but the total state is separable with respect to a partition $S | \bar{S}$. In this scenario, we can select an index $i$ from the uncorrelated subsystem $\bar{S}$. It follows that $\tau_{q_1 q_i} = 0$. Moreover, since tracing out particle $i$ does not affect the correlations within $S$, we have $\tau_{q_1(\text{rest})} = \tau_{q_1(\text{rest} \setminus q_i)}$. This leads to a zero value for the term corresponding to index $i$ in Eq.~(\ref{Eq.ResEntDefinition}).

Since $\mathcal{R}_{q_1}$ is defined as the minimum over all $i$, $\mathcal{R}_{q_1}(|\psi\rangle) = 0$ holds for any biseparable pure state. Due to the convex roof definition, $\mathcal{R}_{q_1}(\rho) = 0$ for any biseparable mixed state $\rho$.

 \medskip
    \noindent \textit{Condition (iii): Convexity.}
    By definition, for mixed states, $\mathcal{R}_{q_1}$ is constructed via the convex roof.

    \medskip
    \noindent \textit{Condition (iv): LU Invariance.}
    The tangle $\tau$ is known to be invariant under local unitary operations \cite{Coffman2000,PhysRevA.64.042315}. Since the $\mathcal{R}_{q_1}$ for pure states is defined as a minimum of the square root of a linear combination of tangles, it inherits this invariance.

\medskip
\noindent \textit{Condition (v): Monotonicity under LOCC.}
To prove the monotonicity for mixed states, we invoke a necessary and sufficient criterion established by Demkowicz-Dobrza\'nski \textit{et al.} in Ref.~\cite{PhysRevA.74.052303}. It can be formulated as the following lemma:

\begin{lemma}
Let $E$ be a real, nonnegative function defined on pure states, invariant under local unitaries, and satisfying the homogeneity condition $E(a|\psi\rangle) = |a|^2 E(|\psi\rangle)$. Its convex roof extension to mixed states is an entanglement monotone if and only if it satisfies the inequality:
\begin{align}
    E\left( \sqrt{p}|\psi\rangle \otimes |0\rangle_F + \sqrt{1-p}|\phi\rangle \otimes |1\rangle_F \right)  \nonumber \\
    \geq p E(|\psi\rangle) + (1-p) &E(|\phi\rangle),
    \label{Eq.LemmaFLAGS}
\end{align}
where $|\psi\rangle$ and $|\phi\rangle$ are arbitrary $n$-partite pure states, and $|0\rangle_F, |1\rangle_F$ are orthogonal states (flags) of an ancilla attached to any local subsystem.
\end{lemma}
To satisfy the degree-2 homogeneity condition of Lemma 1 (since the tangle $\tau$ is of degree 4), we define the $\mathcal{R}_{q_1}$ using the square root, yielding $\mathcal{R}_{q_1}(a|\psi\rangle) = |a|^2 \mathcal{R}_{q_1}(|\psi\rangle)$.

We now verify that $\mathcal{R}_{q_1}$ satisfies Eq.~(\ref{Eq.LemmaFLAGS}). Consider the superposition state with local flags
\begin{align}
  |\Psi\rangle = \sqrt{p}|\psi\rangle|0\rangle_F + \sqrt{1-p}|\phi\rangle|1\rangle_F.
\end{align}
Due to the orthogonality of the flags ($\langle 0|1\rangle_F=0$), the reduced density matrix of any subsystem $S$ becomes a convex combination: $\rho_S^{\Psi} = p \rho_S^{\psi} + (1-p) \rho_S^{\phi}$. We analyze the terms in the definition of $\mathcal{R}_{q_1}$:

 The one-to-group tangle $\tau_{q_1(\text{rest})}(\rho) \equiv 4\det(\rho_{q_1})$ is a concave function of the density matrix. Thus:
\begin{equation}
    \tau_{q_1(\text{rest})}(\rho_{q_1}^{\Psi}) \geq p \tau_{q_1(\text{rest})}(\rho_{q_1}^{\psi}) + (1-p) \tau_{q_1(\text{rest})}(\rho_{q_1}^{\phi}).
    \label{Eq.ConcavityTau}
\end{equation}

 The subtractive terms $\tau_{q_1 q_i}$ and $\tau_{q_1(\text{rest}\setminus q_i)}$ correspond to mixed-state tangles. Defined via the convex roof, they are inherently convex functions:
\begin{equation}
    \tau_{\text{sub}}(\rho_{\text{sub}}^{\Psi}) \leq p \tau_{\text{sub}}(\rho_{\text{sub}}^{\psi}) + (1-p) \tau_{\text{sub}}(\rho_{\text{sub}}^{\phi}).
    \label{Eq.ConvexitySub}
\end{equation}

Let $\mathcal{D}^{(i)}(|\Psi\rangle)$ denote the term inside the minimization for a specific index $i$. Combining Eq.~(\ref{Eq.ConcavityTau}) and Eq.~(\ref{Eq.ConvexitySub}),  we have:
\begin{align}
    \mathcal{D}^{(i)}(|\Psi\rangle) &= \tau_{q_1(\text{rest})}^{\Psi} - \left( \tau_{q_1 q_i}^{\Psi} + \tau_{q_1(\text{rest}\setminus q_i)}^{\Psi} \right) \nonumber \\
    &\ge p \mathcal{D}^{(i)}(|\psi\rangle) + (1-p) \mathcal{D}^{(i)}(|\phi\rangle).
\end{align}

Finally, the MRE is defined as $\mathcal{R}_{q_1} = \min_i \sqrt{\mathcal{D}^{(i)}}$. Since the square root function $f(x)=\sqrt{x}$ is concave, and the minimum of concave functions is also concave, we obtain:
\begin{align}
    \mathcal{R}_{q_1}(|\Psi\rangle) &= \min_i \sqrt{\mathcal{D}^{(i)}(|\Psi\rangle)} \nonumber \\
    &\geq \min_i \sqrt{p \mathcal{D}^{(i)}(|\psi\rangle) + (1-p) \mathcal{D}^{(i)}(|\phi\rangle)} \nonumber \\
    &\geq \min_i \left[ p \sqrt{\mathcal{D}^{(i)}(|\psi\rangle)} + (1-p) \sqrt{\mathcal{D}^{(i)}(|\phi\rangle)} \right] \nonumber \\
    &\geq p \mathcal{R}_{q_1}(|\psi\rangle) + (1-p) \mathcal{R}_{q_1}(|\phi\rangle).
\end{align}

This confirms that $\mathcal{R}_{q_1}$ satisfies Lemma 1 and is therefore an entanglement monotone.
\end{proof}

One of the counterexamples demonstrating that the MRE does not satisfy Condition (ii) is the $W$ state. For the $W$ state, the MRE vanishes ($\mathcal{R}_{q_1} = 0$), a property similar to that of the 3-tangle.
It is worth noting that, unlike the 3-tangle, the proposed  MRE is not permutation invariant, as it generally depends on the choice of the focus qubit $q_1$.
Nevertheless, one can define a permutation-invariant global  {entanglement monotone} by minimizing over all possible choices of the focus qubit, thus eliminating the dependency on any specific particle.

\section{ENTANGLEMENT DROP INDUCED BY TRACING OUT A PARTICLE}

 {
\subsection{Minimum Tangle Drop}}
To address the issue where the MRE vanishes for the
W state, we propose a new approach based on the reduction of entanglement under particle loss.  {Before formally defining the method, we first clarify the terminology used in this work. We introduce the ¡°entanglement drop¡± as a general conceptual framework, which characterizes the reduction in global one-to-group entanglement when a single constituent particle is traced out. Based on this overarching concept, specific functional implementations can be constructed depending on the chosen bipartite entanglement quantifier. Specifically, when the tangle (squared concurrence) is employed, it yields the ¡°minimum tangle drop¡±; when the negativity is utilized, it results in the ¡°minimum negativity drop¡±. Following this framework, we first introduce the specific definition based on the minimum tangle drop to resolve the
W state deficiency.}

\begin{corollary}
For an arbitrary $n$-qubit pure state, the minimum tangle drop with respect to qubit $q_1$ is defined as:
\begin{equation}
   {
\mathcal{D}_{q_1} = \min_{i\in\{2,\dots,n\}} \sqrt{\tau_{q_1(q_2\dots q_n)} - \tau_{q_1(q_2\dots \hat{q}_i \dots q_n)}}.}
\label{eq:min_ent_drop}
\end{equation}
For mixed states, it is defined via the convex roof extension, taking the infimum over all possible ensembles of pure states.
This quantity is an entanglement monotone and satisfies four of the five criteria for a GME measure, with the exception of Condition (ii).
\end{corollary}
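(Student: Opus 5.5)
The plan is to run the proof of Theorem 1 again with the pairwise term $\tau_{q_1q_i}$ deleted. One preliminary point needs attention first: the quantity under the square root must be well defined and nonnegative. By the Osborne--Verstraete inequality, Eq.~(\ref{Eq.9}), we have $\tau_{q_1(q_2\dots q_n)}-\tau_{q_1(q_2\dots\hat q_i\dots q_n)}\ge \tau_{q_1q_i}\ge 0$, so the square root in Eq.~(\ref{eq:min_ent_drop}) is real. This also gives $\mathcal{D}_{q_1}\ge \mathcal{R}_{q_1}$ term by term, which is consistent with the stated aim of repairing the W-state deficiency. Throughout, $\tau_{q_1(\text{rest}\setminus q_i)}$ denotes the convex-roof tangle of the $2\times 2^{n-2}$ reduced state $\rho_{q_1,\text{rest}\setminus q_i}$, which is well defined by Eq.~(\ref{Eq.3}).

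\textit{Conditions (i), (iii), (iv).} For (i), take a biseparable pure state. If $q_1$ factors off, then $\tau_{q_1(\text{rest})}=0$, and by the nonnegativity just established every term vanishes. Otherwise the state is $|\chi\rangle_S\otimes|\xi\rangle_{\bar S}$ with $q_1\in S$ and $\bar S\neq\emptyset$. Choose $i\in\bar S$. Tracing out $q_i$ leaves $\rho_{q_1}$ unchanged, and the reduced state on $\text{rest}\setminus q_i$ is $|\chi\rangle\langle\chi|\otimes\rho_{\bar S\setminus q_i}$, which has a decomposition consisting of product pure states $|\chi\rangle\otimes|\eta_k\rangle$. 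Each of these has tangle $4\det\rho_{q_1}$, so both tangles coincide. Concavity of $4\det$ forces this decomposition to be optimal, so the $i$-th term is zero and $\mathcal{D}_{q_1}=0$. The convex roof then extends this to biseparable mixed states. Condition (iii) holds by construction. Condition (iv) follows because each tangle entering the definition is LU invariant.

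\textit{Condition (v).} Here we invoke Lemma 1. Homogeneity of degree 2 is immediate from the square root, exactly as for $\mathcal{R}_{q_1}$. Take the flagged state $|\Psi\rangle=\sqrt p|\psi\rangle|0\rangle_F+\sqrt{1-p}|\phi\rangle|1\rangle_F$ and set $\mathcal{G}^{(i)}=\tau_{q_1(\text{rest})}-\tau_{q_1(\text{rest}\setminus q_i)}$. Every reduced state of $|\Psi\rangle$ that excludes the flag is $p\rho^\psi+(1-p)\rho^\phi$, so two ingredients apply:
\begin{itemize}
\item concavity of $4\det\rho_{q_1}$ gives $\tau_{q_1(\text{rest})}^\Psi\ge p\,\tau^\psi+(1-p)\tau^\phi$;
\item convexity of the convex-roof tangle gives $\tau_{q_1(\text{rest}\setminus q_i)}^\Psi\le p\,\tau^\psi+(1-p)\tau^\phi$.
\end{itemize}
Subtracting yields $\mathcal{G}^{(i)}(\Psi)\ge p\,\mathcal{G}^{(i)}(\psi)+(1-p)\mathcal{G}^{(i)}(\phi)$. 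The same chain as in the proof of Theorem 1 then finishes the argument: monotonicity and concavity of $\sqrt{\cdot}$, followed by the bound $\min_i(pa_i+(1-p)b_i)\ge p\min_i a_i+(1-p)\min_i b_i$.

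\textit{Main obstacle.} The hardest step is the flag bookkeeping. The flag is attached to some local party, which may be $q_1$, some $q_i$, or another qubit. If the flag sits on $q_1$, then the ``one-to-group'' tangle is no longer a qubit quantity, because the party $q_1F$ is four-dimensional. In addition, if the flag sits on the traced-out qubit $q_i$, the reduced state on the rest is still the mixture $p\rho^\psi+(1-p)\rho^\phi$, but one must check that the flag does not enter the remaining terms. My first approach would be to argue that $\tau_{q_1(\text{rest})}$ is the linear entropy $2[1-\mathrm{tr}\rho_{q_1}^2]$, which is concave in arbitrary dimension, and that the convex-roof tangle remains convex for a $d\times D$ system. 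In every flag placement the same two inequalities then hold, with the flag either traced out or carried along inside a party. The step to check carefully is that including the flag in a party only adds an orthogonal label that the convex roof cannot exploit to lower the value below the mixture bound.

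Finally, condition (ii) fails, and it suffices to exhibit a single GME state with $\mathcal{D}_{q_1}=0$. I would first look for a GHZ-type or graph state in which tracing out one qubit leaves $\rho_{q_1}$ maximally mixed while the residual $q_1|\text{rest}\setminus q_i$ tangle stays maximal. One candidate is a 4-qubit cluster state in which some $q_i$ is not adjacent to $q_1$ (or, if needed, a 5-qubit code state), checked by direct computation.
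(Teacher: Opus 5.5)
\paragraph{The gap: the flag step in condition (v).} Your route is the paper's route. The paper proves this corollary only by declaring it analogous to Theorem 1, and you rerun that proof with $\tau_{q_1q_i}$ removed. The problem is that the step you single out as the ``main obstacle'' is a genuine gap, and the resolution you propose is false.

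Put the flag $F$ on any qubit $q_j$ with $j\notin\{1,i\}$, or on $q_1$. Then $F$ is not traced out in the subtractive term. The state $\mathrm{Tr}_{q_i}|\Psi\rangle\langle\Psi|$ keeps the coherent cross terms $\sqrt{p(1-p)}\,\mathrm{Tr}_{q_i}(|\psi\rangle\langle\phi|)\otimes|0\rangle\langle 1|_F+\mathrm{h.c.}$. So it is not $p\rho^\psi+(1-p)\rho^\phi$, and convexity of the convex roof says nothing about it. The flag is a coherent label, not an ``orthogonal label'' that can simply be carried along.

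In fact the inequality runs the other way. Dephasing $F$ is a local channel on whichever party holds it. It produces a locally flagged mixture whose tangle is exactly $p\tau^\psi+(1-p)\tau^\phi$. Monotonicity of the convex-roof tangle then gives $\tau^\Psi_{q_1(\text{rest}\setminus q_i)}\ge p\,\tau^\psi+(1-p)\,\tau^\phi$.

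A concrete failure:
\begin{itemize}
\item Take $n=3$, $i=3$, $|\psi\rangle=|\Psi^+\rangle_{13}|0\rangle_2$, $|\phi\rangle=|000\rangle$, $p=\tfrac23$, with $F$ attached to qubit 2. Then $|\Psi\rangle$ is a W state of $1,3,F$ (this is just measuring qubit 2 of a W state).
\item For $\Psi$: $\mathcal{G}^{(3)}(\Psi)=\tfrac89-\tfrac49=\tfrac49$.
\item For the branches: $p\,\mathcal{G}^{(3)}(\psi)+(1-p)\,\mathcal{G}^{(3)}(\phi)=\tfrac23\cdot 1+\tfrac13\cdot 0=\tfrac23>\tfrac49$.
\item The culprit is the subtractive tangle, which equals $\tfrac49$ for $\Psi$ but $0$ in both branches.
\end{itemize}
Your two bullet inequalities hold only when $F$ sits on $q_i$ itself, and then only for that single index. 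Pointing to Theorem 1 does not help, because its proof makes the same unjustified assertion that every reduced state of $|\Psi\rangle$ is the convex mixture.

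\paragraph{What a proof would need.} This example does not refute the corollary: after the square root, $\sqrt{4/9}=\tfrac23=\tfrac23\sqrt1+\tfrac13\sqrt0$, with equality. So the square root is not merely a homogeneity device. It carries the whole inequality with no slack, and a valid proof has to bound $\sqrt{\mathcal{G}^{(i)}}$ directly rather than $\mathcal{G}^{(i)}$.

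For three qubits the CKW relations give $\tau_{1(23)}-\tau_{12}=\tau_{123}+\tau_{13}=(\lambda_1+\lambda_2)^2$, where $\lambda_{1,2}$ are the Wootters values of $\rho_{13}$. Hence $\sqrt{\mathcal{G}^{(3)}}$ is the concurrence of assistance of $\rho_{13}$. Its monotonicity needs different input:
\begin{itemize}
\item concavity, for operations on qubit 2;
\item the bound $\sum_m|\det M_m|\le 1$, for Kraus operators on qubits 1 or 3.
\end{itemize}
For general $n$, an analogous argument at the level of $\sqrt{\mathcal{G}^{(i)}}$ is missing.

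\paragraph{Two smaller points.}
\begin{itemize}
\item In condition (i), concavity of $4\det$ only shows that every decomposition has average at most $4\det\rho_{q_1}$. That is the trivial direction. The correct reason is that every vector in the range of $|\chi\rangle\langle\chi|\otimes\rho_{\bar S\setminus q_i}$ has the form $|\chi\rangle\otimes|\eta\rangle$, so all decompositions give the same $q_1$-marginal.
\item For condition (ii), your linear-cluster candidate is exactly the paper's chain state $\mathrm{CNOT}_{23}(|\Phi^+\rangle_{12}|\Phi^+\rangle_{34})$ with $q_i=q_4$, and it works.
\end{itemize}
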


The proof that $\mathcal{D}_{q_1}$ satisfies the required conditions is  analogous to the proof provided for Theorem 1.
This entanglement monotone yields a positive value for the $W$ state because it constitutes the MRE plus a two-qubit entanglement term, and the latter is strictly positive for the $W$ state.
It is worth noting that, specifically for three-qubit systems, minimizing this drop measure over all possible permutations of the qubits yields a global invariant:
\begin{equation}
\mathcal{D}_{\min} = \min_{k \in \{1, 2, 3\}} \mathcal{D}_{q_k}.
\end{equation}
This symmetrized quantity is equivalent to the MPC~\cite{Dong2024_MPC}. Given that the MPC has been rigorously established as a valid measure for verifying GME, this equivalence confirms the physical validity of our proposed measure in the tripartite regime.

Unfortunately, this entanglement monotone is not without limitations and fails to detect certain GME states. A  counterexample is the chain-type graph state, constructed by taking the tensor product of two Bell states and applying a CNOT gate between the intermediate qubits:
\begin{equation}
|\Psi_{\text{chain}}\rangle = \text{CNOT}_{23} \left( |\Phi^+\rangle_{12} \otimes |\Phi^+\rangle_{34} \right).
\end{equation}
Although this state is genuinely multipartite entangled and cannot be separated across any bipartition, its {minimum entanglement drop} vanishes. Specifically, if one traces out a distant particle (e.g., particle 4) relative to the focus qubit $q_1$, the entanglement between $q_1$ and the remaining system remains unchanged compared to the global state. Consequently, the difference term in the minimization vanishes, leading to $\mathcal{D}_{q_1} = 0$.

Furthermore, the practical application of  {Corollary 1} involves certain computational complexities. Specifically, calculating the one-tangle for the reduced density matrix $\rho_{\text{rest}\setminus q_i}$ requires evaluating the entanglement of a mixed state. Although this reduced state---obtained by tracing out a single qubit from a pure state---has a rank of at most 2, which allows the convex roof to be determined through specific analytical methods such as convex characteristic curves \cite{PhysRevA.72.022309,PhysRevA.77.032310}, the process remains mathematically intricate. In contrast, the negativity is based solely on the singular values of the partial transpose, offering a significantly more efficient and direct calculation. To address this, we introduce a variation of the entanglement monotone based on Negativity.

 {
\subsection{Minimum Negativity Drop}}

\begin{theorem}
For an arbitrary $n$-qubit state, the minimum negativity drop with respect to qubit $q_1$ is defined as:
\begin{equation}
\mathcal{D}_{\mathcal{N},q_1} = \min_{i \in \{2, \dots, n\}} \left( \mathcal{N}_{q_1|\mathrm{rest}} - \mathcal{N}_{q_1|\mathrm{rest}\setminus q_i} \right),
\label{eq:negativity_loss}
\end{equation}
where $\mathcal{N}_{A|B} = ||\rho^{T_A}||_1 - 1$ denotes the renormalized negativity across the partition $A|B$ \cite{Vidal2002}. For mixed states, the entanglement monotone is defined via the convex roof extension, taking the infimum over all possible pure-state ensembles. This quantity is an entanglement monotone and satisfies four of the five criteria for a GME measure, with the exception of Condition (ii).
\end{theorem}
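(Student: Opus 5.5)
We follow the template of the proof of Theorem 1 and check Conditions (i), (iii), (iv) and (v) in turn. As in Corollary 1, the pure-state quantity has no square root. Negativity is linear in the density matrix, so $\mathcal{N}(a|\psi\rangle)=|a|^2\mathcal{N}(|\psi\rangle)$, and $\mathcal{D}_{\mathcal{N},q_1}$ already has the degree-2 homogeneity that Lemma 1 requires.

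Before the four conditions, we record that each term $\mathcal{N}_{q_1|\mathrm{rest}}-\mathcal{N}_{q_1|\mathrm{rest}\setminus q_i}$ is nonnegative. Tracing out $q_i$ is a local operation on the "rest" side, and negativity does not increase under LOCC, so the function is real and nonnegative as Lemma 1 demands.

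\emph{Condition (iv).} LU invariance follows because $\|\rho^{T_A}\|_1$ is unchanged by local unitaries.

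\emph{Condition (iii).} Convexity holds by the convex-roof construction.

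\emph{Condition (i).} If $q_1$ factors off, both negativities vanish. Otherwise $|\psi\rangle=|\chi\rangle_S\otimes|\eta\rangle_{\bar S}$ with $q_1\in S$, and we pick $i\in\bar S$. The reduced state on $\mathrm{rest}\setminus q_i$, together with $q_1$, is $|\chi\rangle\langle\chi|_S\otimes\sigma_{\bar S\setminus q_i}$. Its partial transpose on $q_1$ factorizes as $(|\chi\rangle\langle\chi|)^{T_{q_1}}\otimes\sigma$, and $\|\sigma\|_1=1$. Hence $\mathcal{N}_{q_1|\mathrm{rest}\setminus q_i}=\mathcal{N}_{q_1|\mathrm{rest}}$, the $i$-th term vanishes, and the minimum is zero. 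Convex roof then gives zero on all biseparable mixed states.

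\emph{Condition (v): the main step.} We verify the flag inequality (\ref{Eq.LemmaFLAGS}) for $|\Psi\rangle=\sqrt p|\psi\rangle|0\rangle_F+\sqrt{1-p}|\phi\rangle|1\rangle_F$, treating the two terms of $\mathcal{D}^{(i)}$ separately, in the spirit of Eqs.~(\ref{Eq.ConcavityTau})--(\ref{Eq.ConvexitySub}).

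For the first term we avoid a concavity argument tied to where the flag sits. Measuring the flag in $\{|0\rangle,|1\rangle\}$ is a local operation that outputs $|\psi\rangle$ with probability $p$ and $|\phi\rangle$ with probability $1-p$. Since negativity is monotone on average under LOCC (Vidal--Werner), this gives
$\mathcal{N}_{q_1|\mathrm{rest}}(\Psi)\ge p\,\mathcal{N}_{q_1|\mathrm{rest}}(\psi)+(1-p)\,\mathcal{N}_{q_1|\mathrm{rest}}(\phi)$.
This works whether $F$ is attached to $q_1$ or to another party. If one prefers to avoid that citation when $F$ is not on $q_1$: for a pure $2\times d$ state the negativity equals $2\sqrt{\det\rho_{q_1}}$, which is concave in $\rho_{q_1}=p\rho^\psi_{q_1}+(1-p)\rho^\phi_{q_1}$.

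For the subtracted term there are two cases.
\begin{itemize}
\item If $F$ is attached to $q_i$, it is traced out together with $q_i$. The reduced state is then $p\rho^\psi+(1-p)\rho^\phi$, and convexity of the trace norm gives $\mathcal{N}(\Psi)\le p\mathcal{N}(\psi)+(1-p)\mathcal{N}(\phi)$.
\item If $F$ stays in the remaining system, the orthogonal flags make the reduced state block diagonal, $p\rho^\psi\otimes|0\rangle\langle0|+(1-p)\rho^\phi\otimes|1\rangle\langle1|$. Partial transposition preserves this block structure, so the trace norm is additive over the blocks and we get equality.
\end{itemize}

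Combining the two terms yields $\mathcal{D}^{(i)}(\Psi)\ge p\mathcal{D}^{(i)}(\psi)+(1-p)\mathcal{D}^{(i)}(\phi)$ for every $i$. Taking the minimum over $i$ on both sides preserves the inequality, exactly as in the last display of the proof of Theorem 1, and Lemma 1 then gives LOCC monotonicity.

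The step we expect to be the main obstacle is the first-term bound in the case where the flag is attached to $q_1$ itself. There the focus party becomes a qutrit-like system, and the pure-state formula $2\sqrt{\det\rho_{q_1}}$ no longer applies directly. We would rely on average monotonicity of negativity under the flag-measuring LOCC to handle that case.

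Condition (ii) is excluded from the statement, and the chain graph state discussed after Corollary 1 indicates it can fail. The reason is that the distant-qubit drop is zero for negativity as well.
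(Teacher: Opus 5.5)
Your arguments for Conditions (i), (iii) and (iv) are fine. So is your bound on the first term in (v): average monotonicity of negativity under the flag measurement holds for any flag position, including $F$ on $q_1$, which the paper's concavity argument does not cover. The real problem is the subtracted term when the flag is kept.

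If $F$ is not attached to $q_i$, tracing out $q_i$ leaves
\[
p\,\rho^{\psi}\otimes|0\rangle\langle 0|_F+(1-p)\,\rho^{\phi}\otimes|1\rangle\langle 1|_F+\sqrt{p(1-p)}\,\bigl(\mathrm{Tr}_{q_i}|\psi\rangle\langle\phi|\otimes|0\rangle\langle 1|_F+\mathrm{h.c.}\bigr).
\]
Orthogonal flags remove these coherences only when $F$ itself is traced out. So the block-diagonal form you assert, and the equality you derive from it, are false. This case cannot be avoided: $F$ sits on one party, while the minimum runs over every $i$. The paper's proof has the same hole. It asserts $\rho_S^{\Psi}=p\rho_S^{\psi}+(1-p)\rho_S^{\phi}$ for every subsystem $S$, which holds only when $S$ excludes the flag.

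The gap cannot be patched, because the inequality you need is false. Take $n=3$, focus $A$, $\psi=|\mathrm{GHZ}\rangle_{ABC}$, $\phi=X_C|\mathrm{GHZ}\rangle_{ABC}$, $p=1/2$, and attach $F$ to $B$.
\begin{itemize}
\item A CNOT from $B$ to $F$ is local to the party $BF$, and it maps $|\Psi\rangle$ to $|\Phi^+\rangle_{AB}\otimes|\Phi^+\rangle_{CF}$.
\item Hence $\mathcal{N}_{A|BCF}=1$, and tracing out $C$ still gives $\mathcal{N}_{A|BF}=1$, so $\mathcal{D}_{\mathcal{N},A}(\Psi)=0$.
\item Yet $\mathcal{D}_{\mathcal{N},A}(\psi)=\mathcal{D}_{\mathcal{N},A}(\phi)=1$, because the two-qubit marginals of GHZ are separable.
\end{itemize}
Lemma 1's flag inequality therefore fails. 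Concretely, if $B$ measures $F$ and $C$ applies $X_C$ on outcome $1$, this LOCC protocol deterministically turns a state with drop $0$ into $|\mathrm{GHZ}\rangle$ with drop $1$.

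So in the setting of Lemma 1, where parties may hold ancillas, the monotonicity claim is false, and no proof along these lines can succeed. The counterexample uses a two-qubit party $BF$. A version restricted to strictly single-qubit parties would need an entirely different argument.
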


\begin{proof}
The proof follows a logic analogous to that of Theorem 1, verifying the required conditions step by step.

\textit{Condition (i): Biseparability.}
Consider a pure state $|\psi\rangle$ that is biseparable with respect to a specific partition.
First, if qubit $q_1$ is separable from the rest of the system ($|\psi\rangle = |q_1\rangle \otimes |\phi\rangle_{\text{rest}}$), the global one-to-group negativity $\mathcal{N}_{q_1|\text{rest}}$ vanishes, leading to $\mathcal{D}_{{\mathcal{N}} ,q_1} = 0$.
Second, consider the case where $q_1$ is entangled within a subsystem $S$, but the total state is a product state with respect to the partition $S|\bar{S}$, i.e., $\rho_{\text{total}} = \rho_S \otimes \rho_{\bar{S}}$. If we trace out a particle $k$ belonging to the uncorrelated subsystem $\bar{S}$, the reduced density matrix becomes $\rho_{\text{rest}\setminus k} = \rho_S \otimes \text{Tr}_k(\rho_{\bar{S}})$.
Calculating the negativity involves the trace norm of the partial transpose. Using the property that the trace norm of a tensor product is the product of the trace norms ($||A \otimes B||_1 = ||A||_1 ||B||_1$), we have:
\begin{align}
||\rho_{\text{total}}^{T_{q_1}}||_1 &= ||(\rho_S \otimes \rho_{\bar{S}})^{T_{q_1}}||_1 \nonumber \\
&= ||\rho_S^{T_{q_1}} \otimes \rho_{\bar{S}}||_1 \nonumber \\
&= ||\rho_S^{T_{q_1}}||_1 \cdot ||\rho_{\bar{S}}||_1.
\label{Eq.23}
\end{align}
Since $\rho_{\bar{S}}$ is a valid density matrix, $||\rho_{\bar{S}}||_1 = \text{Tr}(\rho_{\bar{S}}) = 1$. Consequently, the negativity remains invariant: $\mathcal{N}(\rho_{\text{total}}) = \mathcal{N}(\rho_S)$. Similarly, tracing out a particle from $\bar{S}$ yields a state with the same negativity structure. Thus, the difference term in Eq.~\eqref{eq:negativity_loss} vanishes for any index $i \in \bar{S}$, ensuring $\mathcal{D}_{{\mathcal{N}} ,q_1} = 0$.

   \textit{Condition (iii): Convexity.}
    By definition, for mixed states, $\mathcal{D}_{{\mathcal{N}} ,q_1}$ is constructed via the convex roof.

\textit{Condition (iv): LU Invariance.}
The negativity is defined based on the singular values of the partial transposed density matrix, which are invariant under local unitary transformations \cite{Vidal2002}. Thus, the entanglement monotone inherits this invariance.

\textit{Condition (v): Monotonicity under LOCC.}
We again invoke Lemma 1. A crucial distinction from the tangle (which is homogeneous of degree 4) is that the negativity is linear with respect to the density matrix, $\mathcal{N}(\lambda \rho) = \lambda \mathcal{N}(\rho)$. Consequently, for a state vector $|\psi\rangle$, it satisfies the degree-2 homogeneity condition:
\begin{equation}
\mathcal{N}(a|\psi\rangle) = \mathcal{N}(|a|^2 \rho_\psi) = |a|^2 \mathcal{N}(|\psi\rangle).
\end{equation}
This property allows us to apply Lemma 1 directly without the need for a square root.
Consider the superposition state with orthogonal flags $|\Psi\rangle = \sqrt{p}|\psi\rangle|0\rangle_F + \sqrt{1-p}|\phi\rangle|1\rangle_F$. The reduced density matrix of the system is the convex combination $\rho_S^{\Psi} = p\rho_S^{\psi} + (1-p)\rho_S^{\phi}$.
For a qubit $q_1$ in a pure state, the negativity is functionally equivalent to the concurrence \cite{PhysRevA.79.012329}, making the one-to-group negativity $\mathcal{N}_{q_1|\text{rest}}$ a concave function of the reduced density matrix. Therefore:
\begin{equation}
\mathcal{N}_{q_1|\text{rest}}(|\Psi\rangle) \geq p\mathcal{N}_{q_1|\text{rest}}(|\psi\rangle) + (1-p)\mathcal{N}_{q_1|\text{rest}}(|\phi\rangle).
\end{equation}
Conversely, the subtractive terms $\mathcal{N}_{q_1|\text{rest}\setminus q_i}$ correspond to mixed-state negativities, which are inherently convex functions \cite{Vidal2002}:
\begin{equation}
\mathcal{N}_{q_1|\text{rest}\setminus q_i}^{\Psi} \leq p\mathcal{N}_{q_1|\text{rest}\setminus q_i}^{\psi}+ (1-p)\mathcal{N}_{q_1|\text{rest}\setminus q_i}^{\phi}.
\end{equation}
Combining these inequalities, the function $\mathcal{D}^{(i)}$ inside the minimization satisfies:
\begin{align}
\mathcal{D}^{(i)}(|\Psi\rangle) &= \mathcal{N}_{q_1|\text{rest}}^{\Psi} - \mathcal{N}_{q_1|\text{rest}\setminus q_i}^{\Psi} \nonumber \\
&\geq p \mathcal{D}^{(i)}(|\psi\rangle) + (1-p) \mathcal{D}^{(i)}(|\phi\rangle).
\end{align}
Since the minimum of a set of concave-like functions preserves this property, the final quantity $\mathcal{D}_{{\mathcal{N}} ,q_1}$ satisfies the condition of Lemma 1 and is therefore an entanglement monotone.
\end{proof}

It is worth noting that the limitation regarding Condition (ii) discussed in Theorem 2 vanishes entirely in the tripartite regime.

\begin{corollary}
For three-qubit states, the minimum negativity drop $\mathcal{D}_{\mathcal{N}, q_1}$ satisfies Condition (ii) and is thus a faithful GME measure.
\end{corollary}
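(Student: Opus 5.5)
The plan is to prove Condition (ii) first for pure three-qubit states. The mixed-state case will then follow from the convex roof definition together with a compactness argument.

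\emph{Pure states.} Let $|\psi\rangle$ be a genuinely entangled three-qubit pure state. Since $\min_i$ runs only over $i\in\{2,3\}$, it suffices to show
\begin{equation*}
\mathcal{N}_{1|23}>\mathcal{N}_{12}\quad\text{and}\quad\mathcal{N}_{1|23}>\mathcal{N}_{13}.
\end{equation*}
Here $\mathcal{N}_{12}$ denotes the negativity of $\rho_{12}$, which is $\mathcal{N}_{q_1|\mathrm{rest}\setminus q_3}$.

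\begin{itemize}
\item With the normalization $\mathcal{N}=\|\rho^{T_A}\|_1-1$, the pure-state one-to-group negativity equals the concurrence: $\mathcal{N}_{1|23}=C_{1|23}=2\sqrt{\det\rho_1}$.
\item For two-qubit mixed states the known bound $\mathcal{N}(\rho_{12})\le C(\rho_{12})$ holds (Verstraete \emph{et al.}).
\item Combining this with the CKW inequality (\ref{Eq.1}) gives
\begin{equation*}
\mathcal{N}_{12}^2\le C_{12}^2\le \tau_{1(23)}-\tau_{13}-\tau_{123}\le \mathcal{N}_{1|23}^2-\tau_{13}-\tau_{123}.
\end{equation*}
\end{itemize}
Strict inequality $\mathcal{N}_{12}<\mathcal{N}_{1|23}$ therefore follows whenever $\tau_{123}+\tau_{13}>0$. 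The same argument with $2\leftrightarrow3$ handles $\mathcal{N}_{13}$, requiring $\tau_{123}+\tau_{12}>0$.

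\emph{Reduction to the SLOCC classification.} By the D\"ur--Vidal--Cirac classification \cite{Dur2000}, a genuinely entangled three-qubit pure state is either GHZ-class or W-class.
\begin{itemize}
\item For a GHZ-class state, $\tau_{123}>0$, and both strict inequalities follow immediately.
\item For a W-class state, $\tau_{123}=0$. I would use the canonical form $|\psi\rangle=a|001\rangle+b|010\rangle+c|100\rangle+d|000\rangle$ with $a,b,c>0$, $d\ge0$, which is reachable by local unitaries under which everything is invariant. Direct computation then gives $C_{12}=2bc>0$ and $C_{13}=2ac>0$, so the needed terms $\tau_{12},\tau_{13}$ are strictly positive.
\end{itemize}
This is the step I expect to need the most care. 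One must confirm that the pairwise concurrences of this canonical form are indeed $2bc$ and $2ac$, independent of $d$. This requires checking that the $d$-terms do not spoil the $X$-like structure used in computing the spectrum of $\rho\tilde\rho$. An alternative route is to argue by contradiction: if $\tau_{123}=0$ and $\tau_{13}=0$, then CKW saturates with $\tau_{1(23)}=\tau_{12}$, and one would show this forces $\rho_{13}$ to be a product state, hence biseparability. I would first try the explicit canonical-form computation, since it is concrete.

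\emph{Mixed states.} The pure-state function $\mathcal{D}_{\mathcal{N},q_1}$ is continuous on the unit sphere, because negativity and the determinant are continuous and a minimum of finitely many continuous functions is continuous. By Carath\'eodory's theorem, the convex roof may be restricted to ensembles of bounded length, which form a compact set, so the infimum is attained by some ensemble $\{p_j,|\psi_j\rangle\}$.

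If $\rho$ is GME, at least one $|\psi_j\rangle$ with $p_j>0$ must be genuinely entangled; otherwise $\rho$ would be biseparable. By the pure-state result this term contributes a strictly positive amount, and all other terms are nonnegative. Hence $\mathcal{D}_{\mathcal{N},q_1}(\rho)>0$. Combined with Theorem 2, which supplies Conditions (i), (iii), (iv), (v), this yields a faithful GME measure.
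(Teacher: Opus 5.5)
Your proof is correct. Its core is the same as the paper's: $\mathcal{N}_{1|23}=C_{1|23}$ for pure states, $\mathcal{N}_{1i}\le C_{1i}$, and CKW combine to give $\mathcal{N}_{1|23}^2-\mathcal{N}_{1i}^2\ge\tau_{123}+\tau_{1j}$ for $\{i,j\}=\{2,3\}$.

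The difference is in how you close the argument. The paper argues by contradiction: a vanishing drop forces $\tau_{123}+\tau_{1j}=0$, so the MPC vanishes, and biseparability then follows from the previously established faithfulness of the MPC, used as a black box. You instead go directly through the D\"ur--Vidal--Cirac classification, using $\tau_{123}>0$ on the GHZ class and $C_{12}=2bc$, $C_{13}=2ac$ on the W-class canonical form. The step you flagged is fine. Take the subnormalized ensemble $\{|v_1\rangle,|v_2\rangle\}=\{|u\rangle,a|00\rangle\}$ of $\rho_{12}$, with $|u\rangle=d|00\rangle+b|01\rangle+c|10\rangle$. The matrix $\bigl(\langle v_k|\tilde v_l\rangle\bigr)_{kl}$, whose singular values enter Wootters' formula, is $\mathrm{diag}(2bc,0)$, so $d$ drops out.

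Your route is self-contained and shows which term keeps the drop positive in each SLOCC class. Since all negativities here are at most $1$, it also gives the quantitative pure-state bound $\mathcal{D}_{\mathcal{N},q_1}\ge\tfrac12\bigl(\tau_{123}+\min_j\tau_{1j}\bigr)$. The paper's route is shorter but rests entirely on the MPC result.

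Finally, the paper's proof treats only pure states, even though the corollary is stated for all three-qubit states. Your Carath\'eodory/compactness argument for attainment of the convex roof supplies that missing mixed-state step.
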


\begin{proof}
For a three-qubit pure state $|\psi\rangle_{123}$, the renormalized negativity is exactly equal to the concurrence \cite{PhysRevA.79.012329}, i.e., $\mathcal{N}_{1|23} = C_{1|23}$. For mixed states, the negativity serves as a lower bound to the concurrence, satisfying $\mathcal{N}_{1i} \leq C_{1i}$ \cite{PhysRevLett.95.040504}.

We argue by contradiction. Assume that $|\psi\rangle_{123}$ is a GME state but $\mathcal{D}_{\mathcal{N}, q_1} = 0$.
The condition $\mathcal{D}_{\mathcal{N}, q_1} = 0$ implies that there exists an index $i \in \{2,3\}$ such that the difference vanishes:
\begin{equation}
\mathcal{N}_{1|23} - \mathcal{N}_{1i} = 0 \implies C_{1|23} = \mathcal{N}_{1i}.
\end{equation}
Using the inequality $\mathcal{N}_{1i} \leq C_{1i}$, we have $C_{1|23} = \mathcal{N}_{1i} \leq C_{1i}$. However, from the CKW inequality given in Eq.~\eqref{Eq.1}, we know that $C_{1|23} \ge  C_{1i}$ is always true. Therefore, the inequalities must saturate, yielding the strict equality:
\begin{equation}
C_{1|23} = C_{1i}.
\end{equation}
Substituting this equality back into the CKW relation, we obtain:
\begin{equation}
C_{1j}^2 + \tau_{123} = 0.
\end{equation}
Consequently, the MPC, which is defined based on the combination of the three-tangle and minimum reduced concurrences~\cite{Dong2024_MPC}, must vanish. Since the MPC has been proven to be a faithful GME measure, its vanishing implies that the state $|\psi\rangle_{123}$ is biseparable.

This contradicts the initial assumption that $|\psi\rangle_{123}$ is a GME state. Thus, for any GME state, the strict inequality $\mathcal{N}_{1|23} > \mathcal{N}_{1i}$ must hold, ensuring that $\mathcal{D}_{\mathcal{N}, q_1} > 0$.
\end{proof}

 {
\subsection{Comparison with Existing Multipartite Measures}
\label{sec:comparison}
}
 {
To explicitly position our proposed framework relative to existing multipartite quantities, we analyze the fundamental structural differences between our approach and previous residual entanglement constructions. Specifically, we contrast our minimum drop method with two prominent theoretical frameworks: the generalized multi-focus residual entanglement \cite{PhysRevA.71.042331}, and the pairwise-sum residual negativity \cite{PhysRevA.75.062308}. By applying these measures to specifically constructed biseparable states, we demonstrate how traditional monogamy-based extensions fail to capture genuine multipartite correlations, whereas our approach remains physically consistent.
}

 {
First, we contrast our approach with the generalized multi-focus residual entanglement proposed by Yu and Song \cite{PhysRevA.71.042331}. For a 4-qubit pure state $|\Psi\rangle_{ABCD}$, the specific formalisms diverge significantly. The Yu-Song measure allows grouping multiple qubits into a focus block (e.g., subsystem $AB$), relying on the assumed high-dimensional monogamy relation:
\begin{equation}
    C^2_{AB|CD} \geq C^2_{AB|C} + C^2_{AB|D},
    \label{eq:yusong_ineq}
\end{equation}
where $C^2(\rho) = 2[1 - \text{Tr}(\rho^2)]$ represents the bipartite squared concurrence extended via the convex roof. In contrast, our proposed measure $\mathcal{D}_{q_1}$ (or $\mathcal{D}_{\mathcal{N}, q_1}$) strictly utilizes a single-qubit focus (e.g., $A$) and evaluates the one-to-group entanglement drop upon particle loss:
\begin{equation}
    \mathcal{D}_{A} = \min_{x \in \{B,C,D\}} \left( \mathcal{N}_{A|rest} - \mathcal{N}_{A|rest\setminus x} \right).
    \label{Eq33a}
\end{equation}
}

 {
The conceptual danger of utilizing a multi-qubit focus block is that the assumed monogamy inequality in Eq.~(\ref{eq:yusong_ineq}) is mathematically invalid for generic quantum states, leading to physically ill-defined negative residual entanglement. Consider a 4-qubit state consisting of two independent Bell pairs:
\begin{equation}
    |\Psi\rangle_{ABCD} = |\Phi^+\rangle_{AC} \otimes |\Phi^+\rangle_{BD},
\end{equation}
where $|\Phi^+\rangle = \frac{1}{\sqrt{2}}(|00\rangle + |11\rangle)$. Let us evaluate the terms in Eq.~(\ref{eq:yusong_ineq}). For the global bipartite entanglement $C^2_{AB|CD}$, tracing out $C$ and $D$ yields a completely mixed reduced density matrix $\rho_{AB} = \frac{I_2}{2}_A \otimes \frac{I_2}{2}_B = \frac{I_4}{4}$. Its squared concurrence is 1.5.
To calculate the subsystem entanglement $C^2_{AB|C}$, we trace out $D$, yielding $\rho_{ABC} = |\Phi^+\rangle\langle\Phi^+|_{AC} \otimes \frac{I_2}{2}_B$. Since this is a product state between $AC$ and $B$, the entanglement between the block $AB$ and $C$ is entirely contributed by the pure Bell pair $AC$. By the convex roof construction, $C^2_{AB|C} = 1$. By symmetry, tracing out $C$ yields $C^2_{AB|D} = 1$. Substituting these into Eq. (\ref{eq:yusong_ineq}) yields $ 1.5 \geq 2$.
This striking contradiction demonstrates that the high-dimensional monogamy inequality is severely violated, which would yield a negative (unphysical) residual entanglement. }

 {
Second, our structural framework resolves the intrinsic failures of negativity-based residual entanglement defined via pairwise sums. Ou and Fan extended the CKW inequality to $n$-qubit negativity using a single focus qubit $q_1$, defining the residual negativity as \cite{PhysRevA.75.062308}:
\begin{equation}
    \pi_{q_1} = \mathcal{N}_{q_1|rest}^2 - \sum_{i \neq q_1} \mathcal{N}_{q_1|i}^2.
\end{equation}
While this avoids the multi-focus paradox, it fails drastically to vanish for biseparable states. Even if one attempts to construct a global invariant by minimizing $\pi_{q_1}$ over all possible focus qubits, the deficiency persists. Consider the 6-qubit pure biseparable state $|\Psi\rangle = |\text{GHZ}\rangle_{123} \otimes |\text{GHZ}\rangle_{456}$ discussed in Sec.~II. For any arbitrarily chosen focus qubit $k \in \{1,\dots,6\}$, its global one-to-group negativity is $\mathcal{N}_{k|rest} = 1$. Since tracing out the remaining qubits to obtain any two-qubit reduced density matrix yields a completely separable mixed state, all pairwise negativities strictly vanish ($\mathcal{N}_{k|j} = 0$ for all $j \neq k$). Consequently, the proposed residual negativity yields $\pi_k = 1 - 0 = 1$ for every single qubit. Minimizing over all foci still yields a global value of $1$, falsely indicating genuine 6-partite entanglement for a strictly biseparable state.
}

 {
In sharp contrast, our proposed method elegantly avoids this breakdown. Because we restrict the focus to a single qubit (e.g., $q_1 = A$), the entanglement is rigorously bounded by the one-to-group monogamy relations. Applying our minimum negativity drop $\mathcal{D}_{\mathcal{N}, A}$ to the same state $|\Psi\rangle_{ABCD}$, the global negativity is $\mathcal{N}_{A|BCD} = \mathcal{N}_{A|C} = 1$. If we trace out $B$ or $D$ (which are uncorrelated with $A$), the local entanglement structure remains intact, yielding $\mathcal{N}_{A|CD} = \mathcal{N}_{A|BC} = 1$, corresponding to an negativity drop of $1 - 1 = 0$. Consequently, the minimum negativity drop $\mathcal{D}_{\mathcal{N}, A} = 0$. This result is perfectly consistent with physical reality: since the state is biseparable across the $AC|BD$ partition, it contains no genuine 4-partite entanglement, and our measure correctly vanishes.}

 {
\section{Detection of Inseparable Clusters via Minimum Negativity Drop}
}

While the minimum entanglement drop established in the previous sections may vanish for certain classes of genuine multipartite entangled states (such as the chain-type graph state discussed in Corollary 1), the phenomenon of entanglement drop induced by qubit loss possesses intrinsic physical significance. In this section, we demonstrate that this drop serves as a witness for identifying the inseparable structural components within a multipartite system based on negativity.

\begin{theorem}
For an arbitrary $n$-qubit pure state, if the one-to-group negativity of $q_1$ decreases upon tracing out a specific particle $x$ (where $x \neq q_1$), i.e.,
\begin{equation}
\mathcal{N}_{q_1|\mathrm{rest}} > \mathcal{N}_{q_1|\mathrm{rest}\setminus x}, \label{eq:31}
\end{equation}
then $q_1$ and $x$ belong to a mutually inseparable subset. We define the ``$k$-partite directly entangled cluster'' $S_{q_1}$ as the set containing $q_1$ and all such particles $x$, where $k = |S_{q_1}|$. The multipartite entanglement within this cluster is quantified by the restricted minimum entanglement drop:
\begin{equation}
\mathcal{D}_{S_{q_1}} \coloneqq \min_{x \in S_{q_1} \setminus \{q_1\}} \left( \mathcal{N}_{q_1|\mathrm{rest}} - \mathcal{N}_{q_1|\mathrm{rest}\setminus x} \right).
\end{equation}
Similar to Theorem 2, the generalization to mixed states follows the convex roof construction, inheriting the established properties.
\end{theorem}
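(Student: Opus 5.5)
The substantive claim is the contrapositive of a separability statement. Suppose the pure state $|\psi\rangle$ admits a bipartition $S|\bar S$ with $q_1\in S$, $x\in\bar S$ and $|\psi\rangle=|\alpha\rangle_S\otimes|\beta\rangle_{\bar S}$. Then we must show that
\begin{equation*}
\mathcal{N}_{q_1|\mathrm{rest}}=\mathcal{N}_{q_1|\mathrm{rest}\setminus x}.
\end{equation*}
Once this is established, a strict drop as in Eq.~(\ref{eq:31}) rules out every bipartition separating $q_1$ from $x$. Hence $q_1$ and $x$ lie in a common inseparable block, namely the part of the finest product decomposition of $|\psi\rangle$ that contains $q_1$. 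The definitions of $S_{q_1}$ and $\mathcal{D}_{S_{q_1}}$ are then just naming conventions. The monotonicity and convex-roof properties of $\mathcal{D}_{S_{q_1}}$ will be inherited verbatim from the proof of Theorem 2.

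For the key equality I will reuse the tensor-product argument from Condition (i) of Theorem 2, Eq.~(\ref{Eq.23}).

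\textbf{Step 1.} Write $\rho=\rho_S\otimes\rho_{\bar S}$. Tracing out $x\in\bar S$ gives
\begin{equation*}
\rho_{\mathrm{rest}\setminus x}=\rho_S\otimes\mathrm{Tr}_x\rho_{\bar S},
\end{equation*}
where the second factor is still a normalized density operator.

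\textbf{Step 2.} Since the partial transpose acts only on $q_1\in S$, both partial transposes factor as $\rho_S^{T_{q_1}}\otimes(\cdot)$. Multiplicativity of the trace norm then gives
\begin{equation*}
\|\rho^{T_{q_1}}\|_1=\|\rho_S^{T_{q_1}}\|_1=\|\rho_{\mathrm{rest}\setminus x}^{T_{q_1}}\|_1 .
\end{equation*}
Both negativities therefore equal $\mathcal{N}(\rho_S)$, so the drop is zero.

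\textbf{Step 3.} This proves the contrapositive. I will also note the consequence for the whole cluster: if $q_1$ lies in the product factor $S$, every $x$ with a strict drop lies in $S$. Therefore $S_{q_1}$ is contained in the minimal factor containing $q_1$.

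I expect the main obstacle to be the interpretation of ``mutually inseparable subset,'' because the converse of the statement fails. The chain graph state example after Corollary 1 shows that two particles can sit in the same inseparable block without any drop. So the theorem should be stated and proved only as a sufficient condition: a drop certifies that $q_1$ and $x$ share a block, and the cluster $S_{q_1}$ is a subset of that block, not necessarily all of it.

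A second point needs care for mixed states. The argument above uses a genuine product factorization, whereas a biseparable mixed state is a convex mixture over different partitions. For mixed states I would therefore not claim the witness property for the raw difference. Instead I would apply the pure-state statement ensemble-wise and use the convex-roof definition of $\mathcal{D}_{S_{q_1}}$, exactly as in Theorem 2.
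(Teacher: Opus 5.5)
Your proof of the witness part is the paper's proof. You argue by contraposition from a product decomposition separating $q_1$ from $x$, then use the factorization of the partial transpose and multiplicativity of the trace norm as in Eq.~\eqref{Eq.23}. That part is correct. Your remarks that the converse fails (the chain state) and that $S_{q_1}$ lies inside the finest product factor containing $q_1$ are accurate refinements.

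The step that does not go through is your assertion that LOCC monotonicity of $\mathcal{D}_{S_{q_1}}$ is inherited verbatim from Theorem~2. Only the convex-roof part is automatic.

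\textbf{Why the Theorem~2 argument breaks.} The flag argument there bounds each fixed-index term $\mathcal{D}^{(i)}(|\psi\rangle)$ from below by a minimum over the \emph{same}, state-independent index set. Here $S_{q_1}$ depends on the state and is defined precisely by discarding zero drops. An index that is minimal for $|\Psi\rangle$ can therefore have $\mathcal{D}^{(i)}(|\psi\rangle)=0<\mathcal{D}_{S_{q_1}}(|\psi\rangle)$, and the chain of inequalities fails.

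\textbf{The conclusion is in fact false.} On four qubits $1,2,3,F$ with focus qubit $1$, take $|\Psi\rangle=\tfrac{1}{\sqrt2}(|\Phi^+\rangle_{13}|0\rangle_2|0\rangle_F+|\mathrm{GHZ}\rangle_{123}|1\rangle_F)$.
\begin{itemize}
\item The reduced state of qubit $1$ is maximally mixed, so $\mathcal{N}_{1|\mathrm{rest}}=1$.
\item Tracing out $2$ or $F$ gives $\mathcal{N}_{1|3F}=\mathcal{N}_{1|23}=1/\sqrt2$.
\item Tracing out $3$ leaves a classical--quantum state, so $\mathcal{N}_{1|2F}=0$.
\end{itemize}
Hence every particle lies in the cluster and $\mathcal{D}_{S_1}(|\Psi\rangle)=1-1/\sqrt2$.

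Now apply an LOCC protocol. Measure $F$ in the computational basis. In the GHZ branch, measure qubit $2$ in the $X$ basis and apply a $Z$ correction on qubit $1$. This deterministically yields $|\Phi^+\rangle_{13}|0\rangle_2|0\rangle_F$, whose cluster is $\{1,3\}$ with $\mathcal{D}_{S_1}=1$. So the restricted minimum can strictly increase under LOCC.

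The paper's proof establishes only the witness statement and is silent on this clause. You should confine your claim to the separability witness (LU invariance and convexity of the convex roof do survive) and not assert inherited monotonicity.
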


\begin{proof}
We proceed by contraposition. Suppose that $q_1$ and $x$ are separable. This implies the existence of a bipartition $M|\bar{M}$ with $q_1 \in M$ and $x \in \bar{M}$ such that the global state factorizes as $|\Psi\rangle = |\phi\rangle_M \otimes |\chi\rangle_{\bar{M}}$. As established in the proof of Theorem 2 [see Eq.~(\ref{Eq.23})], tracing out a particle from an uncorrelated subsystem does not alter the singular values of the partial transpose, thereby leaving the negativity of $q_1$ invariant. Consequently, the observation of a strictly positive entanglement drop, $\mathcal{N}_{q_1|\mathrm{rest}} > \mathcal{N}_{q_1|\mathrm{rest} \setminus x}$, negates the assumption of separability, confirming that $q_1$ and $x$ must belong to a mutually inseparable component.
\end{proof}

 {
At a fundamental level, the physical mechanism behind this entanglement drop is intimately connected to the monogamy properties of quantum correlations, specifically the Koashi-Winter relation \cite{PhysRevA.69.022309}. Using the tangle-based formulation \cite{Osborne2006}, this monogamy relation for a tripartite pure state $\rho_{ABC}$ is expressed as:
\begin{equation}
     \tau(\rho_{A|BC}) = I_2^\leftarrow(\rho_{AB}) + \tau(\rho_{AC}),
\end{equation}
where $I_2^\leftarrow(\rho_{AB})$ denotes the classical correlation within subsystem $AB$.  From this relation, one can explicitly see that the tangle drop $\tau(\rho_{A|BC}) - \tau(\rho_{AC})$ vanishes if and only if the classical correlation $I_2^\leftarrow(\rho_{AB})$ is strictly zero, which dictates that the subsystem $AB$ is a completely uncorrelated product state ($\rho_{AB} = \rho_A \otimes \rho_B$).
}

 {
Crucially, our proposed negativity drop possesses the exact same diagnostic capability as the tangle drop in identifying the separability of qubits. This operational equivalence can be rigorously demonstrated as follows. Let $\mathcal{C}$ denote the concurrence (the square root of the tangle, $\mathcal{C} = \sqrt{\tau}$) and $\mathcal{N}$ denote the negativity. For any pure state bipartition where one subsystem is a single qubit (e.g., the partition $A|BC$), the concurrence and the negativity are exactly equal \cite{PhysRevA.79.012329}, yielding $\mathcal{N}_{A|BC} = \mathcal{C}_{A|BC}$. However, for an arbitrary mixed state such as the reduced density matrix $\rho_{AC}$, the negativity constitutes a strict lower bound for the concurrence \cite{PhysRevLett.95.040504}, meaning $\mathcal{C}_{AC} \ge \mathcal{N}_{AC}$.
}
 {
Suppose the negativity drop vanishes, yielding $\mathcal{N}_{A|BC} = \mathcal{N}_{AC}$. Substituting this into the previous relations, we obtain:
\begin{equation}
    \mathcal{C}_{AC} \ge \mathcal{N}_{AC} = \mathcal{N}_{A|BC} = \mathcal{C}_{A|BC}.
\end{equation}
On the other hand, tracing out particle $B$ from the system $ABC$ to obtain $\rho_{AC}$ is a local operation. Since concurrence is an entanglement monotone, it is non-increasing under partial trace \cite{PhysRevA.74.052303}, which requires $\mathcal{C}_{AC} \le \mathcal{C}_{A|BC}$. For both inequalities to hold simultaneously, we must have strictly $\mathcal{C}_{A|BC} = \mathcal{C}_{AC}$. Consequently, a zero negativity drop strictly necessitates a zero tangle drop.
}

 {
Furthermore, this mathematical relationship reveals a compelling physical feature: the negativity drop generally exhibits a more pronounced response to particle loss than the concurrence drop. By directly subtracting the mixed-state inequality ($\mathcal{N}_{AC} \le \mathcal{C}_{AC}$) from the pure-state equality ($\mathcal{N}_{A|BC} = \mathcal{C}_{A|BC}$), we inherently obtain:
\begin{equation}
    \mathcal{N}_{A|BC} - \mathcal{N}_{AC} \ge \mathcal{C}_{A|BC} - \mathcal{C}_{AC}.
\end{equation}
This indicates that the negativity drop is strictly greater than or equal to the concurrence drop. Consequently, using the minimum negativity drop not only preserves the precise diagnostic equivalence for biseparability (the "zero" points), but also serves as a more sensitive and amplified indicator for detecting structural entanglement loss in generic mixed states.
}

 {
Given this robust theoretical equivalence, one might conceive a straightforward step-by-step probing method to identify inseparable clusters:
(1) First, trace out all particles except the focus qubit $q_1$ and a specific target particle $q_k$ to obtain the two-qubit reduced density matrix $\rho_{q_1 q_k}$.
(2) Second, verify whether $\rho_{q_1 q_k}$ is a product state by checking if $\rho_{q_1 q_k} = \rho_{q_1} \otimes \rho_{q_k}$.
(3) Finally, assign $q_1$ and $q_k$ to the same inseparable cluster if they are correlated ($\rho_{q_1 q_k} \neq \rho_{q_1} \otimes \rho_{q_k}$).
}

 {
While this localized procedure intuitively identifies pairwise correlations, it serves primarily as a qualitative verification.  The primary advantage of our proposed method lies in providing a systematic, quantitative, and computationally efficient operational framework. By formalizing the structural probe via the minimum negativity drop, our approach bypasses complex optimizations and systematically quantifies the exact global entanglement supported by specific particles, effectively extracting the connectivity fingerprints of complex multipartite systems.
}

 {
Building upon this quantitative framework, Theorem 3 translates these theoretical advantages into a powerful operational tool for analyzing the underlying entanglement structure of arbitrary quantum states. Specifically,}  by evaluating the entanglement drop of a focus particle $q_1$ against all other particles, we can identify a cluster $S_{q_1}$. Depending on the size of this cluster, we distinguish two cases:

\begin{itemize}
    \item Case I: $|S_{q_1}| = n$. The state is genuinely multipartite entangled, as exemplified by GHZ states or W states.

    \item Case II: $|S_{q_1}| < n$. The system contains at least a $k$-partite inseparable block (note that this identification does not strictly imply separability from the remaining particles outside the block).
\end{itemize}

Furthermore, this classification facilitates a recursive decomposition strategy to simplify GME verification. One can first identify the inseparable block $S_{q_1}$, then select a particle $p \notin S_{q_1}$ to determine its respective block $S_p$, and repeat this process until all particles are assigned to disjoint inseparable clusters $C_1, C_2, \ldots, C_m$. The problem of verifying GME then reduces to checking the separability between these composite blocks, which may significantly reduce the computational complexity compared to optimizing over exponentially many microscopic bipartitions.

 {
It is crucial to emphasize that Theorem 3 serves primarily as a heuristic, proof-of-principle tool for inferring the underlying connectivity structure of quantum states. To systematically evaluate its domain of applicability, we benchmark our method against the LC equivalence classes of small graph states. }

\begin{figure}[tpb]
\centering
\includegraphics[width=0.9\columnwidth]{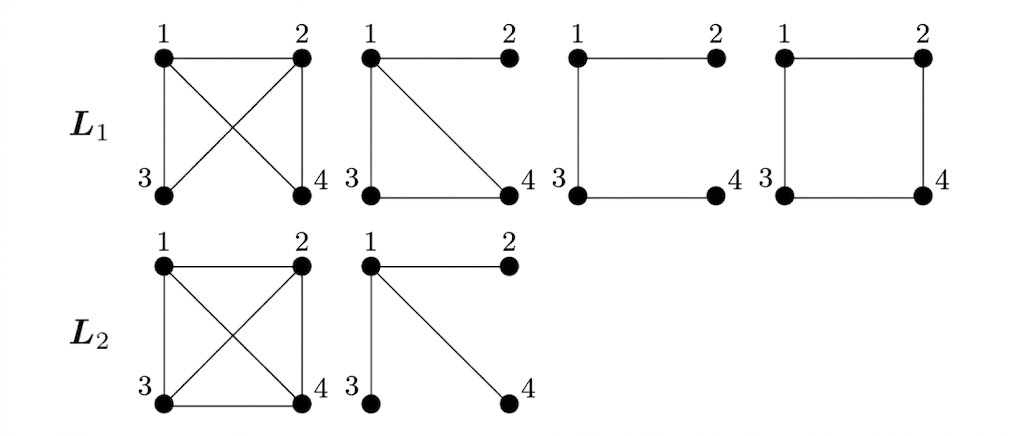} %
\caption{ {The two LC orbits for 4-qubit connected graph states, with nodes explicitly labeled 1 to 4. The top row represents the orbit $L_1$, and the bottom row represents the orbit $L_2$. States within the same orbit exhibit different local entanglement sensitivities, yielding distinct cluster fingerprints under our structural probe.}}
\label{fig:LC_orbits}
\end{figure}

 {
For $n=3$ systems, all connected graph states belong to a single LC equivalence class (locally equivalent to the GHZ state). Due to their global dependency, removing any single particle destroys the entanglement relative to the focus qubit. Consequently, the method reliably identifies a single global cluster containing all particles, $\mathcal{S} = \{1, 2, 3\}$.}

 {
As the system scales to $n=4$ qubits, the landscape of connected graph states partitions into exactly two distinct LC orbits, denoted as $L_1$ and $L_2$ (see Fig.~\ref{fig:LC_orbits}) \cite{danielsen2005selfdualquantumcodesgraphs}. Our cluster identification method fundamentally distinguishes these two orbits: states in $L_1$ consistently partition into two disjoint two-qubit clusters, while states in $L_2$ form a single four-qubit cluster. Interestingly, while states within the same orbit share equivalent global entanglement properties, our method reveals that their local vulnerability to particle loss¡ªand thus their specific structural fingerprint¡ªvaries significantly depending on the specific graph topology. Based on the explicit node numbering in Fig.~\ref{fig:LC_orbits}, we observe the following:
\begin{itemize}
    \item \textbf{Orbit $\boldsymbol{L_1}$:} All four graph topologies in this orbit are characterized by a fragmented structure consisting of two inseparable pairs. However, the exact distribution of particles within these clusters differs. For the first three graphs, the inseparable clusters are consistently identified as the local pairs $\{1, 2\}$ and $\{3, 4\}$, reflecting that intra-pair correlations dominate their structural robustness. Conversely, for the fourth graph (the square ring), the entanglement monotone surprisingly identifies the non-adjacent diagonal pairs as the inseparable clusters, yielding $\{1, 4\}$ and $\{2, 3\}$. This result reveals a distinct sensitivity to long-range interference effects over nearest-neighbor interactions.
    \item \textbf{Orbit $\boldsymbol{L_2}$:} For both the fully connected graph and the star graph in orbit $L_2$, the global dependency is much stronger. Removing any specific particle induces a global entanglement drop relative to the focus qubit. Consequently, the method identifies a single global cluster containing all particles, $\mathcal{S} = \{1, 2, 3, 4\}$, reflecting a GHZ-type macro-robustness.
\end{itemize}}

 {
This systematic analysis candidly highlights a fundamental characteristic of our framework: the identified clusters are generally not invariant under LC operations. Rather than serving as an abstract GME classifier, the method acts as a physical probe sensitive to the explicit hardware-level routing of entanglement and local connectivity.} Consequently, identifying these localized clusters $S_k$ constitutes only the preliminary step in characterizing the global state. For a complete verification of the genuine multipartite entanglement structure, a subsequent examination is required: one must determine whether these identified composite blocks are separable from one another.

 {
Recognizing this heuristic nature,} it is important to acknowledge the limitations of this method. A particularly challenging counterexample is the five-qubit error-correcting code state~\cite{PhysRevLett.77.198, PhysRevA.54.3824}, also known as the 5-qubit stabilizer state~\cite{PhysRevA.54.1862}. This state exhibits a remarkable property: while the global state is a highly entangled pure state, the reduced density matrix of any subsystem involving fewer than 3 qubits appears completely random (i.e., maximally mixed).
Physically, this implies that for any pair of particles $q_1$ and $x$, their reduced state factorizes as $\rho_{q_1 x} = \rho_{q_1} \otimes \rho_{x}$. Since there are no pairwise correlations (classical or quantum), the removal of particle $x$ does not affect the local entanglement structure of $q_1$ relative to the rest of the system. Consequently, the direct entanglement drop vanishes ($\mathcal{N}_{q_1|\mathrm{rest}} - \mathcal{N}_{q_1|\mathrm{rest}\setminus x} = 0$), thereby highlighting the state's exceptional robustness against qubit loss.

  {
\section{Operational Advantages and Practical Applications}}
\label{sec:computability}

 {
  \subsection{Computational Scalability and Experimental Estimation}
To comprehensively address the practical utility of the minimum negativity drop, it is crucial to benchmark its computational complexity against existing global measures and discuss its experimental feasibility on near-term devices.
}

 {
The evaluation of GME via traditional measures, such as the GMC or the generalized geometric measure (GGM) \cite{PhysRevA.81.012308}, inherently suffers from the exponential scaling of the Hilbert space. Specifically, verifying GME via GMC requires minimizing the bipartite concurrence over all $2^{n-1}-1$ possible bipartitions. For highly symmetric states (e.g., standard GHZ or W states), this calculation can be analytically simplified. However, a comprehensive algebraic geometry analysis by Sauerwein \textit{et al.}  rigorously demonstrated that for $n>4$ qubits, generic pure states possess a ``trivial stabilizer'' \cite{PhysRevX.8.031020}. This means they completely lack local symmetries. Consequently, for generic multi-qubit states, no algebraic shortcuts exist to bypass the bipartition loop: evaluating GMC or GGM strictly demands executing the full $\mathcal{O}(2^n)$ optimization, severely limiting their scalability.}

 {
In stark contrast, our proposed operational framework fundamentally bypasses this global exponential search by deploying a localized structural probe. Instead of searching all bipartitions, our method identifies inseparable clusters by tracing out individual particles, shifting the focus from global optimization to local connectivity mapping.}

 {
To illustrate this computational advantage, consider a composite pure state consisting of an $m$-qubit GHZ state and an $(n-m)$-qubit W state: $|\Psi\rangle = |\text{GHZ}\rangle_m \otimes |W\rangle_{n-m}$.
Using the GMC, without any prior knowledge of the state's internal structure, identifying that this state is biseparable (i.e., $\text{GMC}=0$) requires evaluating the concurrence across all $2^{n-1}-1$ bipartitions.
Using our minimum drop framework combined with the cluster-finding algorithm, the complexity is drastically reduced:
(1) Selecting a focus qubit $q_1$ in the $m$-partite subset, tracing out the other particles sequentially requires $n-1$ pairwise checks to accurately identify the $m$-qubit inseparable cluster.
(2) Selecting another focus qubit $q_k$ in the remaining subset requires $n-m-1$ checks to map the $(n-m)$-qubit cluster.
(3) A final check between the two identified clusters confirms their separability.
This approach effectively reduces the intractably large $\mathcal{O}(2^n)$ scaling inherent to global bipartition measures by converting GME detection into a highly scalable structural problem.
Including the initial cluster identification phase, the total computational complexity is bounded by $\mathcal{O}(n^2) + \mathcal{O}(2^k)$, where the polynomial term $\mathcal{O}(n^2)$ corresponds to mapping the local correlations, and the exponential term $\mathcal{O}(2^k)$ represents the final optimization over the $k$ identified macroscopic clusters ($k \le n$).
 {It is crucial to acknowledge the worst-case scenario where $k$ can potentially reach $n$, such as the 5-qubit error-correcting code state, where the exponential scaling $\mathcal{O}(2^n)$ remains unavoidable. Nevertheless, for many physically relevant states, our method remains highly effective. For instance, standard GHZ and W states are directly identified as a single global cluster in their entirety, reducing the complexity to strictly $\mathcal{O}(n)$. For other generic states, the final computational complexity dynamically depends on their specific underlying structural topologies.}
For the specific guided example above  {featuring the composite GHZ and W state}, since the algorithm efficiently groups the system into merely $k = 2$ massive clusters, the exponential bottleneck is entirely bypassed, and the entire verification procedure operates strictly in linear time, $\mathcal{O}(n)$.
}

 {
Beyond theoretical computability, the structural probe enabled by our framework is exceptionally well-suited for experimental implementation on Noisy Intermediate-Scale Quantum (NISQ) devices. Calculating the negativity or concurrence via full quantum state tomography requires $\mathcal{O}(d^{2n})$ measurements, which is experimentally prohibitive \cite{NielsenChuang}. }

 {
However, as mathematically established in Sec.~IV, the vanishing of the negativity drop is strictly equivalent to whether the reduced two-qubit density matrix $\rho_{q_1 q_k}$ factorizes into a product state. For a globally pure multi-qubit state, determining whether $\rho_{q_1 q_k} = \rho_{q_1} \otimes \rho_{q_k}$ can be elegantly translated into a purity product check:
\begin{equation}
    \text{Tr}(\rho_{q_1 q_k}^2) = \text{Tr}(\rho_{q_1}^2) \cdot \text{Tr}(\rho_{q_k}^2).
\end{equation}
If this strict equality is violated, the qubits $q_1$ and $q_k$ are correlated and belong to the same inseparable cluster.}

 {
Crucially, these purity checks can be directly and efficiently estimated using randomized measurements and the classical shadows formalism \cite{PhysRevLett.125.200501}. The core conclusion of Ref.~\cite{PhysRevLett.125.200501} establishes that the purities $\text{Tr}(\rho^2)$ coincide with the second moment of the partially transposed density matrix ($p_2$), and bipartite entanglement in mixed states can be rigorously certified by evaluating the third moment ($p_3$) via the condition $p_3 < p_2^2$.
Integrating this powerful tool into our framework yields a complete, experimentally viable protocol: after utilizing $p_2$ (purity product checks) to map out all inseparable clusters, we can subsequently apply the $p_3 < p_2^2$ criteria to evaluate the bipartitions between these identified macroscopic clusters. If the condition holds across the cluster bipartitions, genuine multipartite entanglement is experimentally confirmed.
}

 {
In this context, it is essential to distinguish between the experimental measurement cost and the classical post-processing complexity. As emphasized in Ref.~\cite{PhysRevLett.125.200501}, a remarkable feature of this randomized measurement protocol is that it only requires single-qubit control and allows for the estimation of many distinct PT moments from the exact same experimental data.
In particular, arbitrary orders $n \ge 2$ and arbitrary partitions $A, B$ (both connected and disconnected) can be simultaneously evaluated.
Consequently, whether one evaluates all $2^{n-1}-1$ bipartitions (as required by traditional measures like GMC) or executes our full two-step algorithmic procedure---performing a polynomial number of localized cluster identifications followed by evaluating the entanglement strictly between the resulting macroscopic clusters---the number of actual quantum experimental runs remains invariant.
Instead, the true bottleneck for traditional multipartite measures lies entirely in the classical hardware: evaluating $\mathcal{O}(2^n)$ non-linear functionals (such as $p_2$ and $p_3$) from the shadow representations is an exponentially hard data-processing task that quickly exhausts classical memory and runtime. Our structural framework fundamentally resolves this by compressing the classical post-processing complexity  {for certain modularly structured states} from an intractable $\mathcal{O}(2^n)$ down to a highly efficient $\mathcal{O}(n^2) + \mathcal{O}(2^k)$  {($k \le n$)}, maximizing the utility of near-term quantum devices without being bottlenecked by classical computational limits.
}

 {
  \subsection{Revealing Structural Evolution in Quantum Dynamics}
Given the growing interest in the role of GME in quantum dynamics and the execution of near-term quantum algorithms (e.g., variational quantum eigensolvers and quantum approximate optimization algorithms), tracking GME has become a key benchmark \cite{PhysRevA.111.022434}. Standard measures such as the GGM or GMC are widely employed to quantify the overall multipartite correlations. However, because these traditional measures compress the rich complexity of the quantum state into a single global scalar, they often obscure the underlying topological structure of the entanglement network as it dynamically forms. }

 {
To demonstrate that our proposed negativity drop provides profound physical insights beyond what is offered by existing global measures, we analyze the dynamical generation of 4-qubit graph states. Consider a system initialized in the fully separable state $|\psi(0)\rangle = |+\rangle^{\otimes 4}$, evolving under the standard Ising-type interaction Hamiltonian $H = \sum_{(i,j)\in E} \sigma_z^{(i)} \sigma_z^{(j)}$ commonly used in quantum optimization and graph state generation. We compare two distinct interaction topologies: the Line graph (with edges $E_{\text{line}} = \{(1,2), (2,3), (3,4)\}$) and the Star graph (with edges $E_{\text{star}} = \{(1,2), (1,3), (1,4)\}$). The system evolves via $U(t) = \exp(-itH)$, yielding the respective ideal graph states at $t = \pi/4$.}

\begin{figure}[htp]
    \centering
    \includegraphics[width=0.9\linewidth]{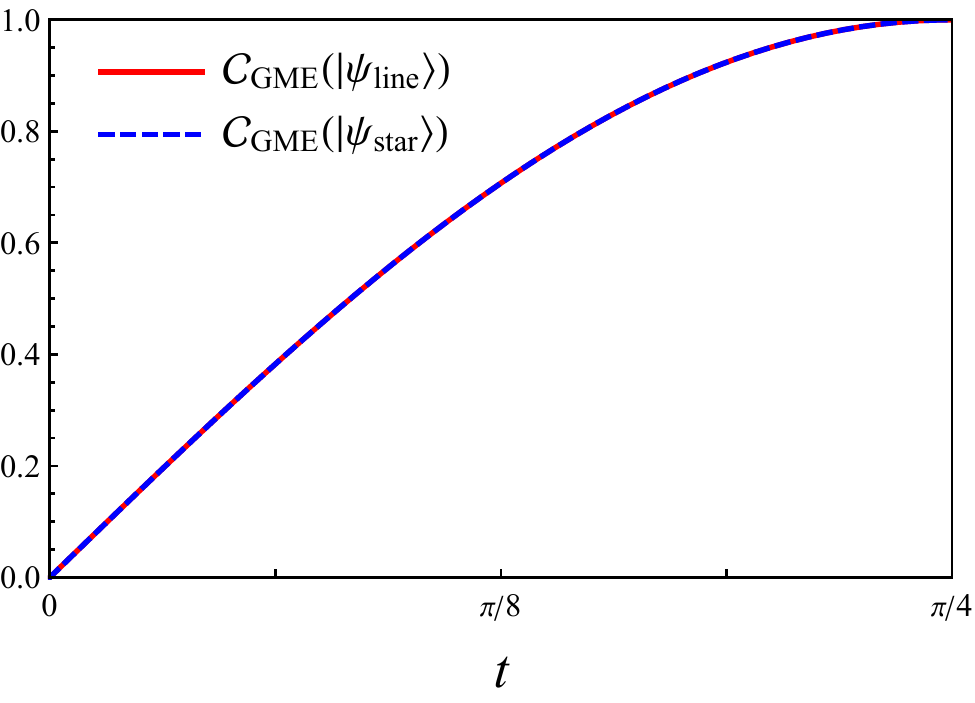}
    \caption{ {The dynamical evolution of the GMC for the 4-qubit Line graph state ($\mathcal{C}_{\text{GME}}(|\psi_{\text{line}}\rangle)$, red solid curve) and the Star graph state ($\mathcal{C}_{\text{GME}}(|\psi_{\text{star}}\rangle)$, blue dashed curve) under Ising interactions. The global measures for both distinct topologies are virtually indistinguishable, monotonically increasing to their maximum values.}}
    \label{fig:global_dynamics}
\end{figure}

 {
As illustrated in Fig.~\ref{fig:global_dynamics}, the evolution of the GMC exhibits nearly identical trajectories for both the Line and Star topologies. As a standard global measure, the GMC is defined by minimizing the bipartite concurrence across all possible bipartitions $\gamma$ of the system:
\begin{equation}
    \mathcal{C}_{\text{GME}}(|\psi\rangle) = \min_{\gamma} \sqrt{2\left[1-\text{Tr}(\rho_{\gamma}^2)\right]}.
\end{equation}
An observer relying solely on such a conventional scalar measure would conclude that the entanglement dynamics of the two systems are fundamentally equivalent, completely failing to detect the distinct graph topologies being constructed at the hardware level.}

\begin{figure}[htp]
    \centering
    \includegraphics[width=0.9\linewidth]{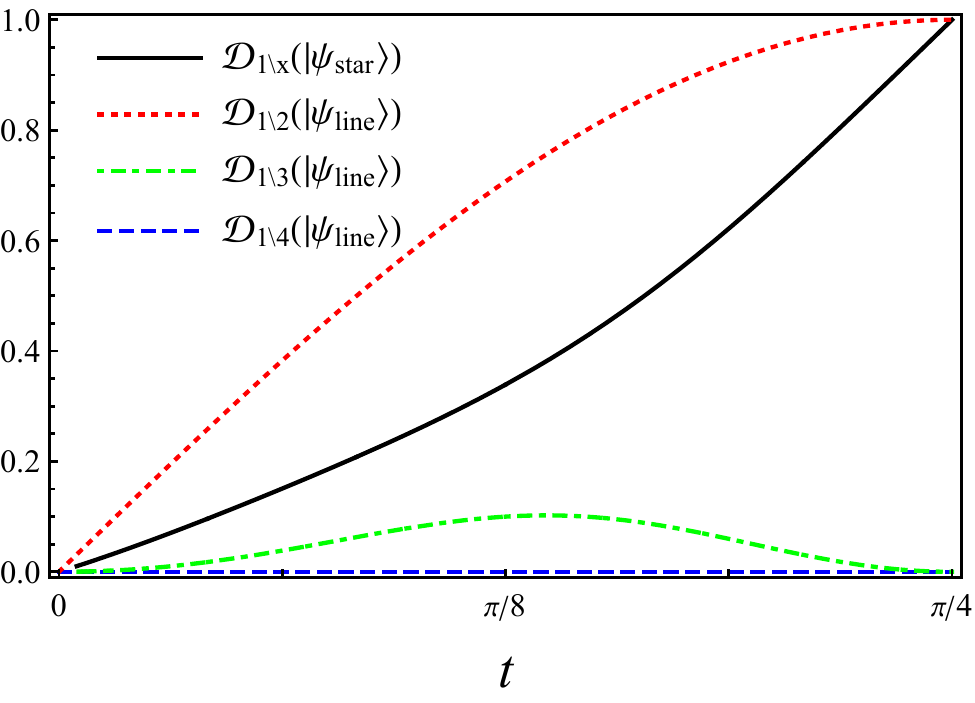}
    \caption{ {The dynamic structural fingerprint revealed by the individual negativity drops $\mathcal{D}_{1\setminus x}(t)$ with respect to focus qubit 1. The black solid curve represents the uniform drop $\mathcal{D}_{1\setminus x}(|\psi_{\text{star}}\rangle)$ for all peripheral nodes ($x \in \{2,3,4\}$) in the Star state. For the Line state, the specific drops $\mathcal{D}_{1\setminus 2}(|\psi_{\text{line}}\rangle)$ (red dotted), $\mathcal{D}_{1\setminus 3}(|\psi_{\text{line}}\rangle)$ (green dashed), and $\mathcal{D}_{1\setminus 4}(|\psi_{\text{line}}\rangle)$ (blue dashed) map the nearest, next-nearest, and distant node dependencies, explicitly capturing the spatiotemporal spread of correlations.}}
    \label{fig:drop_dynamics}
\end{figure}

 {
In stark contrast, our structural probe uniquely unveils the rich spatiotemporal dynamics of the correlation network. By selecting qubit 1 as the focus, we track the individual negativity drop $\mathcal{D}_{1\setminus x}(t) = \mathcal{N}_{1|\text{rest}}(t) - \mathcal{N}_{1|\text{rest}\setminus x}(t)$ dynamically, as shown in Fig.~\ref{fig:drop_dynamics}.
For the Star graph (where qubit 1 acts as the central node), the drops with respect to any of the leaf nodes ($x \in \{2,3,4\}$) evolve identically. This uniform trajectory, corresponding to $\mathcal{D}_{1\setminus x}(|\psi_{\text{star}}\rangle)$, grows monotonically to 1, confirming that the central qubit is uniformly connected to all other nodes, such that losing any one of them impacts its global entanglement in exactly the same way.
Conversely, for the Line graph (where qubit 1 is an endpoint), the negativity drops reveal a distinct spatial hierarchy, explicitly mapping how the correlations depend on the distance from the focus qubit:
\begin{itemize}
    \item The nearest-neighbor drop $\mathcal{D}_{1\setminus 2}(|\psi_{\text{line}}\rangle)$ grows rapidly to 1, showing that qubit 1 is most strongly and directly linked to its immediate neighbor, qubit 2, which acts as the primary anchor for its entanglement throughout the process.
    \item The distant-node drop $\mathcal{D}_{1\setminus 4}(|\psi_{\text{line}}\rangle)$ remains strictly zero throughout the entire evolution. This is mathematically rigorous: the interaction $U_{34}(t)$ acts purely as a local unitary on the $1|234$ bipartition, meaning particle 4 can be traced out without ever affecting the one-to-group entanglement of particle 1.
    \item Crucially, the next-nearest-neighbor drop $\mathcal{D}_{1\setminus 3}(|\psi_{\text{line}}\rangle)$ exhibits a compelling transient behavior. It rises at intermediate times before decaying back to zero at $t = \pi/4$. This explicitly captures a transient ``correlation wave''---qubit 1 temporarily ``feels'' the entanglement contribution of qubit 3 during the evolution. The fact that the drop ultimately vanishes at the final state perfectly corroborates our earlier static cluster analysis, elegantly confirming that qubits 1 and 3 eventually decouple into distinct, disjoint inseparable clusters ($\{1,2\}$ and $\{3,4\}$).
\end{itemize}}

 {
This dynamical application confirms that our proposed negativity drop extends significantly beyond serving merely as a binary GME verifier. While conventional standard measures compress the entire state's complexity into a single scalar value, offering little insight into its internal organization, our localized structural probe dynamically tracks the evolving spatial connectivity and internal network formation of the quantum state. By reliably distinguishing topologically distinct multipartite correlations as they develop, this method provides an operationally meaningful diagnostic tool for characterizing the underlying structure of quantum many-body states and multipartite entanglement generation protocols.}

 {
\subsection{Analytical Solutions and Environmental Noise Robustness for W States}
To further validate the practical utility and scalability of our framework, we derive analytical closed-form expressions for the minimum tangle drop of $n$-qubit W states. For the $n$-qubit pure W state:
\begin{equation}
|W_n\rangle = \frac{1}{\sqrt{n}}(|10\dots0\rangle + |010\dots0\rangle + \dots + |0\dots01\rangle),
\end{equation}
the minimum tangle drop is analytically determined. Due to the permutation symmetry of the W state, the entanglement drop is invariant regardless of which particle is chosen as the focus qubit $q_1$ or which particle $i$ is traced out. By partitioning the state $|W_n\rangle$ relative to $q_1$, we obtain its Schmidt-like decomposition:
\begin{equation}
|W_n\rangle = \frac{1}{\sqrt{n}} |1\rangle_{q_1} |0\rangle^{\otimes n-1} + \sqrt{\frac{n-1}{n}} |0\rangle_{q_1} |W_{n-1}\rangle_{\text{rest}},
\end{equation}
from which the initial one-to-group tangle is calculated as $\tau_{q_1|\text{rest}} = 4(n-1)/n^2$. To evaluate the sensitivity to particle loss, we further decompose the state by isolating an arbitrary qubit $i$ ($i \neq q_1$) and the remaining $n-2$ qubits:
\begin{equation}
\begin{split}
|W_n\rangle = &\frac{1}{\sqrt{n}} |10\rangle_{q_1 i} |0\rangle^{\otimes n-2} + \frac{1}{\sqrt{n}} |01\rangle_{q_1 i} |0\rangle^{\otimes n-2} \\
&+ \sqrt{\frac{n-2}{n}} |00\rangle_{q_1 i} |W_{n-2}\rangle.
\end{split}
\label{Eq.b47}
\end{equation}
Upon tracing out qubit $i$, the system reduces to the following mixed state involving the focus qubit $q_1$ and the $(n-2)$-qubit subsystem:
\begin{equation}
\begin{split}
\rho_{q_1, \text{rest}\setminus i} = &\frac{1}{n} |0\rangle |0\rangle^{\otimes n-2} \langle 0| \langle 0|^{\otimes n-2} \\
&+ \left( \frac{1}{\sqrt{n}} |1\rangle |0\rangle^{\otimes n-2} + \sqrt{\frac{n-2}{n}} |0\rangle |W_{n-2}\rangle \right) \text{H.c.}
\end{split}
\label{Eq.c47}
\end{equation}
By treating the collective $(n-2)$ qubits as a single logical qubit with the basis $\{|0\rangle_L = |0\rangle^{\otimes n-2}, |1\rangle_L = |W_{n-2}\rangle\}$, the density matrix in  {Eq. (\ref{Eq.c47}) can be effectively mapped onto a two-qubit mixed state, whose entanglement can be directly calculated via the concurrence formula in Eq. (\ref{Eq.c4}).} For this specific reduced structure, the one-to-group tangle relative to $q_1$ is found to be $\tau_{q_1|\text{rest}\setminus i} = 4(n-2)/n^2$. By substituting these results into the definition of tangle drop $\mathcal{D}_{q_1} = \sqrt{\tau_{q_1|\text{rest}} - \tau_{q_1|\text{rest}\setminus i}}$, we obtain the closed-form expression:
\begin{equation}
\mathcal{D}_{\min} = \sqrt{\frac{4(n-1)}{n^2} - \frac{4(n-2)}{n^2}} = \frac{2}{n}.
\end{equation}
This result provides a precise quantitative baseline for evaluating multipartite entanglement in the large-$n$ limit. Notably, as the number of particles $n$ increases, the entanglement drop $\mathcal{D}_{\min}$ monotonically decreases and eventually vanishes. This indicates that W states become increasingly insensitive to the loss of a single constituent particle as the system size grows, reflecting a growing robustness of its global entanglement structure in larger networks.
}

 {
Furthermore, we investigate the robustness of this measure against a specific form of environmental noise: the mixing with the ground state $|0\dots0\rangle$. This scenario serves as a simplified model for energy dissipation or amplitude-damping-like decay to the vacuum state in various physical platforms. The resulting noisy state is given by:
\begin{equation}
\rho = (1-\gamma)|W_n\rangle\langle W_n| + \gamma |0\dots0\rangle\langle 0\dots0|,
\end{equation}
where $\gamma \in [0,1]$ denotes the noise parameter.
To evaluate the entanglement drop for this mixed state, we utilize the convex roof construction specifically optimized for rank-2 systems \cite{PhysRevA.77.032310}. Following this methodology, we consider a general pure-state superposition of the two eigenvectors:
\begin{equation}
|\Psi(\gamma, \phi)\rangle = \sqrt{1-\gamma}|W_n\rangle + e^{i\phi}\sqrt{\gamma}|0\dots0\rangle,
\end{equation}
where $\phi$ is the relative phase. By applying a derivation analogous to that of the pure W state (treating the $n-2$ qubits as a logical qubit), the minimum tangle drop for this pure superposition is determined as:
\begin{equation}
\mathcal{D}_{\min}(|\Psi(\gamma, \phi)\rangle) = \frac{2(1-\gamma)}{n}.
\end{equation}
Remarkably, this calculation reveals that the result is strictly independent of the phase $\phi$.
Furthermore, the resulting expression is found to be a linear function of the noise parameter $\gamma$. According to the fundamental property established in Ref. \cite{PhysRevA.77.032310}, this expression rigorously serves as the valid convex roof for the mixed state. Consequently, we obtain the generalized analytical solution for the noisy W state:
\begin{equation}
\mathcal{D}_{\min} = \frac{2(1-\gamma)}{n}.
\end{equation}
As illustrated in Fig.~\ref{fig4}, the minimum tangle drop $\mathcal{D}_{\min}$ exhibits a monotonic decay with respect to both the system size $n$ and the noise intensity $\gamma$. This signifies that for larger W states, the global entanglement becomes increasingly distributed, such that the loss of a single constituent particle exerts a progressively smaller relative impact on the focus qubit's one-to-group correlations. This predictable scaling behavior highlights the potential of the entanglement drop as a robust and interpretable diagnostic tool for experimental entanglement characterization in noisy near-term quantum devices.}

\begin{figure}[t]
    \centering
    \includegraphics[width=0.48\textwidth]{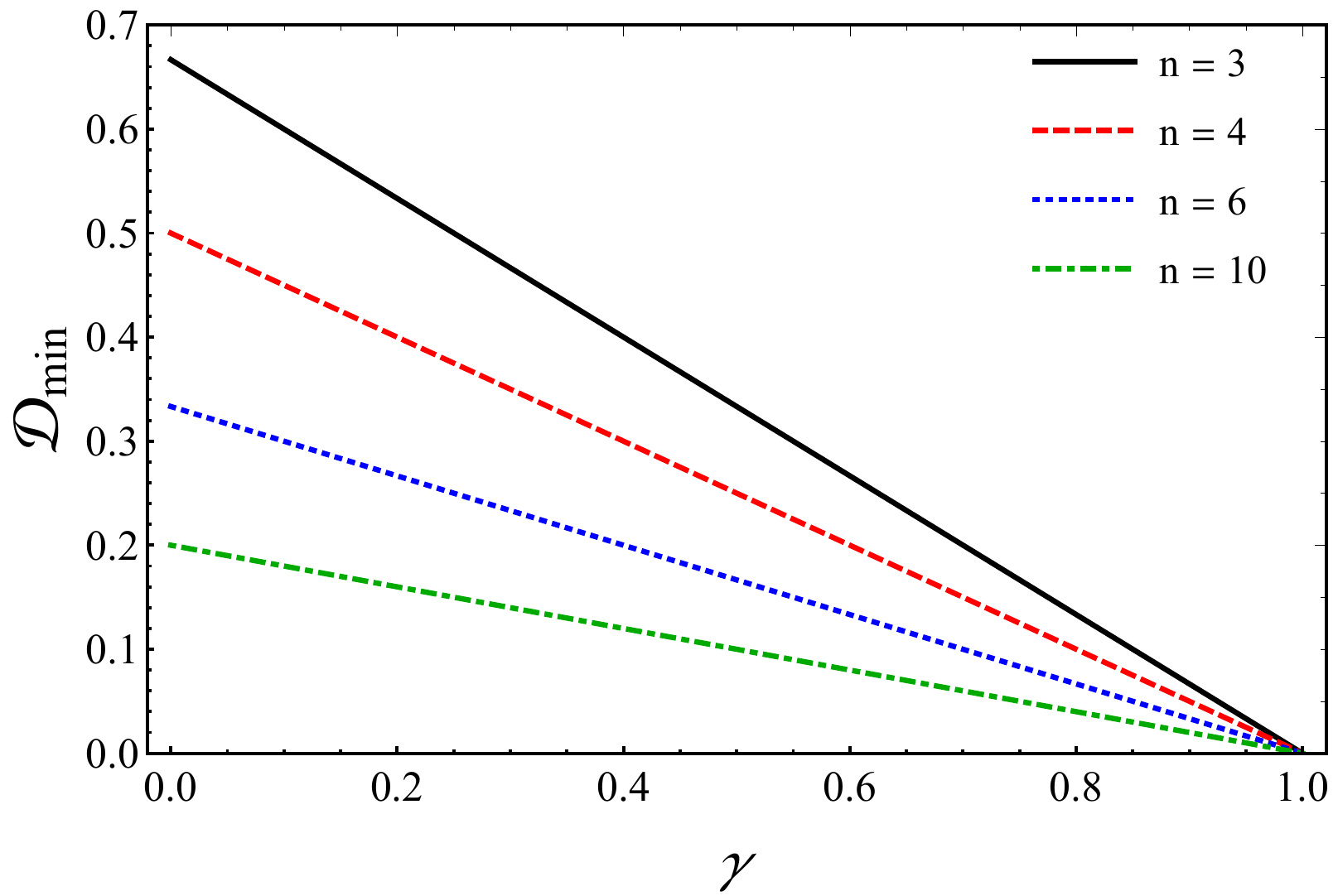}
    \caption{ {The variation of the minimum tangle drop $\mathcal{D}_{\min}$ for the $n$-qubit noisy W state as a function of the particle number $n$ and the noise parameter $\gamma$. }}
    \label{fig4}
\end{figure}

\section{CONCLUSION}

In this paper, we proposed a novel  {multipartite entanglement monotone} based on the minimum negativity drop induced by qubit loss. We rigorously proved that this quantity constitutes a valid entanglement monotone under LOCC. In the tripartite regime, we demonstrated that this measure is a faithful quantifier of GME.
 {
Beyond quantification, we established a proof-of-principle operational framework to analyze the internal connectivity of quantum systems. By evaluating the entanglement drop under particle loss, we successfully identified inseparable clusters, revealing distinct structural fingerprints that can differentiate graph topologies even within the same LC equivalence classes (e.g., the $L_1$ and $L_2$ orbits of 4-qubit graph states).  {To bridge theoretical concepts with experimental feasibility, we showed how randomized measurements via classical shadows can efficiently estimate these local negativity drops, drastically reducing  post-processing overheads. Additionally, our dynamical analysis revealed that tracking these localized drops offers a far more granular view of entanglement network evolution than conventional scalar measures.}
 {Furthermore, we established the measure's scalability in realistic scenarios by deriving exact analytical solutions for noisy $n$-qubit W states, confirming its robustness and predictable scaling against environmental degradation.}
  {However, to ensure a candid assessment of its applicability, we highlight that this structural probe is fundamentally a heuristic tool: its diagnostic capability strictly vanishes for highly robust, symmetrically correlated states, such as the 5-qubit error-correcting code.} Despite these well-defined boundaries, our framework provides a computationally efficient alternative to global optimization methods  {for modularly structured states}. By shifting the focus from exponential-scale bipartition searches to a  {localized} mapping of inseparable clusters, our approach offers a scalable diagnostic methodology
for characterizing the connectivity structure of complex
multipartite quantum systems.
}

\begin{acknowledgements}
 This work was supported by the National Science
Foundation of China under (Grants No. 12475009 and
No. 12075001), Anhui Provincial Key Research and
Development Plan (Grant No. 2022b13020004), Anhui
Province Science and Technology Innovation Project
(Grant No. 202423r06050004),  Anhui Provincial
University Scientific Research Major Project (Grant
No. 2024AH040008), Anhui Provincial Department of Industry and Information Technology (Grant No. JB24044), and the UK
Research and Innovation  (EPSRC Grant No. EP/X010929/1).
\end{acknowledgements}


%

\end{document}